\documentclass[12pt]{amsart}

\usepackage{amssymb}
\usepackage[centertags]{amsmath}

\hoffset=-2cm \voffset=-1.5cm \textwidth=16.3cm \textheight=24cm

\newtheorem%
{thm}{Theorem}[section]
\newtheorem%
{proposition}[thm]{Proposition}
\newtheorem%
{lemma}[thm]{Lemma}
\newtheorem%
{definition}[thm]{Definition}
\newtheorem%
{corollary}[thm]{Corollary}
\newtheorem%
{conjecture}[thm]{Conjecture} \theoremstyle{definition}

\theoremstyle{remark}
\newtheorem{rem}{Remark}[section]

\newcommand{\dontprint}[1]{\relax}

\usepackage[all]{xy}

\title[Variational tricomplex of a local  gauge system]
{Variational tricomplex of a local  gauge system, \\Lagrange structure and weak Poisson bracket}
\author{A. A. Sharapov}
\address{Physics Faculty, Tomsk State University, Tomsk 634050, Russia}
\email{sharapov@phys.tsu.ru}

\begin{document}
\maketitle
\begin{abstract}
We introduce the concept of a variational tricomplex, which is applicable  both to variational and non-variational gauge systems. Assigning this tricomplex with an appropriate symplectic structure and a Cauchy foliation, we establish a general correspondence between the Lagrangian and Hamiltonian pictures of one and the same (not necessarily variational)  dynamics. In practical terms, this correspondence allows one to construct the generating functional of weak Poisson structure starting from that of Lagrange structure. As a byproduct, a covariant procedure is proposed for deriving the classical BRST charge of the BFV formalism by a given BV master action. The general approach is illustrated by the examples of Maxwell's electrodynamics and chiral bosons in two dimensions.
\end{abstract}
\section{Introduction}

The Lagrangian and Hamiltonian formalisms provide two most popular
approaches to classical dynamics. They also serve as departing
points for the procedures of path-integral and canonical
quantizations.  For non-singular Lagrangian theories the equivalence
between the two approaches is established by the Legendre
transformation. The treatment of singular Lagrangians  appears to be
more tricky: besides the Legendre transform, it involves the
Dirac-Bergmann algorithm and leads to the so-called constrained
Hamiltonian dynamics \cite{HT}. In spite of technical differences,
the Lagrangian and Hamiltonian formalisms have one point in common:
either   assumes the classical equations  of motion to come from the
least action principle.  The variational nature of classical
dynamics is thus at the heart of both the formalisms. Promoted at
the quantum level, this feature has been embodied in a wide-spread
belief  that the variational formulation of classical dynamics is
``a must''  prerequisite for the existence of a consistent
quantization. The actual situation, however, is much more
interesting.

In \cite{LS0}, \cite{KazLS}, it was shown that both the Lagrangian and Hamiltonian pictures of dynamics admit nontrivial extensions beyond the scope  of variational principles. The key elements of these  extensions are, respectively, the notions of a Lagrange structure and a weak Poisson structure. Added to the classical equations of motion, these structures make possible a fully consistent quantization of the classical theory along the lines of path-integral or deformation quantization depending on which picture of dynamics, Lagrangian or Hamiltonian, is considered. They also allow one to establish a general correspondence between the conservation lows and symmetries, providing thus a generalization of the seminal Nother's theorem \cite{KLS1}. It is significant that the existence of the aforementioned structures appears to be less restrictive condition for the classical dynamics than the existence of variational principle.   Furthermore, one and the same equations of motion may have a variety of compatible Lagrange or weak Poisson structures leading, in general, to inequivalent quantum theories. Similar to the usual BRST theory of variational gauge systems \cite{HT}, either of the two structures admits a compact formulation in terms of a generating functional and a master equation on the ghost-extended configuration or  phase space of the theory.

The aim of  this paper is to establish  a direct correspondence  between the Lagrangian and Hamiltonian pictures of (non-)variational dynamics at the level of the generating functionals. Establishing of such a correspondence  is a matter of principle; its existence is just as fundamental as the equivalence of the conventional Lagrangian and Hamiltonian formalisms. The present paper can be viewed as a continuation of our previous work \cite{Sh}, where a relation between the Lagrange structure and the weak Poisson bracket was established through the Peierls bracket construction. It should be noted that unlike the Poisson bracket, the Peierls bracket is essentially non-local and becomes local only  in the equal-time limit. This non-locality may be regarded as an unnecessary complication when one is only interested  in deriving the equal-time Poisson bracket and its subsequent deformation quantization. In the next sections, we propose a general construction which is fully local and allows one to define the generating functional of  weak Poisson structure by the corresponding functional for the Lagrange structure. Algebraically, it  links the $S_\infty$- and $P_\infty$-algebras underlying the gauge system. Central to our approach is the concept of a variational bicomplex \cite{Anderson}, which we extend to a tricomplex by adding the classical BRST differential. This allows us to replace the usual calculus of variations by a  more handy calculus of exterior differential forms on jet bundles.

In quite a similar context the variational tricomplex for gauge systems was first introduced in \cite{BH} as the Koszul-Tate resolution of the usual variational bicomplex for partial differential equations.  Using this tricomplex the authors of \cite{BH} were able to relate various Lie algebras associated with the symmetries and conservation laws of a variational gauge system. Our tricomplex is similar in nature but involves the full BRST differential, and not its Koszul-Tate part. Besides,  we do not restrict our consideration to the case of variational theories.

It turns out that the approach we develop below is useful even in
the variational situation in determining a correspondence between
the Batalin-Vilkovisky (BV) formalism for Lagrangian gauge systems
\cite{BV1}, \cite{BV2} and its Hamiltonian counterpart known as the
Batalin-Fradkin-Vilkovisky (BFV) formalism \cite{BFV1}, \cite{BFV2},
\cite{BFV3}. Usually, these are developed in parallel starting,
respectively, from the classical action or the first class
constraints.  In either case one applies the homological
perturbation theory (hpt) to obtain the master action or the
classical BRST charge  at the output \cite{HT}. As already
mentioned,  the relation between both the pictures of dynamics is
established  through the Dirac-Bergmann (DB) algorithm, which allows
one to generate the complete set of the first class constraints by
the classical action. All these can be displayed diagrammatically as
follows:
$$
\xymatrix@R=10pt{*+[F]\txt{\Small \;Lagrangian gauge theory\;\;\\\Small with action $S_0$} \ar[rr]^-{\txt{\Small \textit{hpt}}}\ar[dd]|{\txt{\Small \textit{DB algorithm}}} &&*+[F]\txt{\Small Master action \\\Small \;\; $S=S_0+\cdots\;\;$}\ar@{.>}[dd]^{?}\\&&\\
*+[F]\txt{\Small Hamiltonian theory with\\\Small the $1$-st class constraints $T_a$} \ar[rr]^-{\txt{\Small \textit{hpt}}}&&*+[F]\txt{\Small BRST charge \\\Small $\Omega=C^aT_a+\cdots$} }
$$

Looking at this picture it is natural to ask about the dotted arrow making the diagram commute. The arrow symbolizes a hypothetical  map or construction relating the BV formalism to the BFV formalism at the level of generating functionals. As we show below such a map does really exist: By making use of the variational tricomplex, we propose a direct construction of the classical BRST charge  by the BV master action. The construction is explicitly covariant (even though we pass to the Hamiltonian picture)  and generates the full spectrum of the BFV ghosts  immediately from that of the BV theory. We also derive the Poisson bracket on the extended phase space of the theory, with respect  to which the classical BRST charge obeys the master equation. Our definition of the Poisson structure is similar in spirit to that presented in Ref.\cite{Dickey} if not identical in two respects.
 For one thing, we define the Poisson algebra of Hamiltonian forms off-shell; for another,  the definition of the  Hamiltonian forms essentially involves the choice of a causal structure on the underlying space-time manifold.  As  a result, we arrive at a rich Poisson algebra of Hamiltonian forms involving not just the first integrals of motion (cf. \cite[Sec.19.7]{Dickey}).

The paper is organized as follows. In the next section, we formalize the notion of a gauge system in terms of a (foliated) variational bicomplex endowed with a BRST differential and a compatible  presymplectic structure. Here we also define the notion of a descendent gauge system, which is basic to our subsequent considerations.  A covariant relationship between the BV and BFV formalisms is established and illustrated in Sec.3. In Sec.4, it is extended to non-variational gauge systems. More precisely, we show that under certain assumptions each Lagrange structure gives rise to a weak Poisson structure. Appendix A contains some basic facts concerning the geometry of jet bundles and the variational bicomplex.

\section{Variational tricomplex of a local gauge system}

In modern language the classical fields are just the sections of a locally trivial, fiber bundle $\pi : E\rightarrow M$ over an $n$-dimensional space-time manifold $M$. The typical fiber $F$ of $E$ is called the \textit{target space of fields}. In case the bundle is trivial, i.e., $E=M\times F$, the fields are merely  the smooth mappings from $M$ to $F$. For the sake of simplicity, we restrict ourselves to fields associated with vector bundles. In this case the space of fields $\Gamma (E)$ has the structure of a real vector space.

Bearing in mind the gauge theories together with their ghost extension as well as the field theories with fermions, we assume $\pi: E\rightarrow M$ to be a $\mathbb{Z}$-graded supervector bundle over the ordinary (non-graded) smooth manifold $M$. The Grassmann parity and the $\mathbb{Z}$-grading of a homogeneous geometrical object $A$ will be denoted by  $\widetilde{A}$ and $\deg A$, respectively. It should be emphasized that in the presence of fermionic fields there is no natural correlation between the Grassmann parity and the $\mathbb{Z}\,$-grading and it is the Grassmann parity which is responsible for the sign rule. Since  throughout the paper we work exclusively in the category of $ \mathbb{Z}\,$-graded supermanifolds, we omit the boring prefixes ``super'' and ``graded'' whenever possible;  smooth  manifolds, vector bundles, commutators, etc., are understood in the graded sense. For a quick introduction to the graded differential geometry we refer the reader to \cite{Vor1}-\cite{CatSch}.

In the local field theory, the dynamics of fields are governed by partial differential equations.
The best way to account for the local structure of fields is to introduce the  variational bicomplex $\Lambda^{\ast,\ast}(J^\infty E; d, \delta)$ on the infinite jet bundle $J^\infty E$ associated with the vector bundle $\pi: E\rightarrow M$. The free variational bicomplex represents a natural kinematical basis for formulating local field theories\footnote{A brief account of this concept is given in Appendix \ref{A}, where we also explain our notation.}.  In order to specify dynamics two more geometrical ingredients are needed. These are the classical BRST differential and the BRST-invariant (pre)symplectic structure on $J^\infty E$. Let us give the corresponding definitions.

\subsection{Presymplectic structure}\label{PrSt} By a \textit{presymplectic} $(2,m)$-form  on $J^\infty E$ we understand an
element $\omega\in \widetilde{\Lambda}^{2,m}(J^\infty E)$ satisfying
\footnote{By abuse of notation, we denote by $\omega$ an element of the quotient space $\widetilde{\Lambda}^{2,m}=\Lambda^{2,m}/d\Lambda^{2,m-1}$ and its particular representative   in $\Lambda^{2,m}$. The sign $\simeq$ means equality modulo $d\Lambda^{\ast,\ast}$.}
\begin{equation}\label{dom}
\delta \omega \simeq 0\,.
\end{equation}
The form $\omega$ is assumed to be homogeneous, so that we can speak of an odd or even  presymplectic structure of definite $\mathbb{Z}$-degree. Triviality of the relative $\delta$-cohomology in positive vertical degree (Proposition \ref{Prop-A}) implies that any presymplectic $(2,m)$-form is exact, namely, there exists a homogeneous  $(1,m)$-form $\theta$ such that $\omega\simeq\delta\theta$. The form $\theta$ is called the \textit{presymplectic potential} for $\omega$.  Clearly, the presymplectic potential is not unique. If $\theta_0$ is one of the presymplectic potentials for $\omega$, then setting   $\omega_0=\delta \theta_0$ we get
$$
\delta \omega_0=0\,,\qquad \omega_0\simeq \omega\,.
$$
In other words,  any presymplectic form has a  $\delta$-closed representative.

Denote by  $\ker\omega$ the space of all evolutionary vector fields $X$ on $J^\infty E$ that fulfill the relation
$$
i_X\omega \simeq 0\,.
$$
A presymplectic form $\omega$ is called non-degenerate if $\ker \omega=0$, in which case we refer to it as a \textit{symplectic form}.

An evolutionary vector field $X$ is called \textit{Hamiltonian} with respect to $\omega$ if it preserves the presymplectic form, that is,
\begin{equation}\label{Xom}
L_X\omega\simeq 0\,.
\end{equation}
Obviously, the Hamiltonian vector fields form a subalgebra in the Lie algebra of all evolutionary vector fields. We denote this subalgebra by $\frak{X}_\omega(J^\infty E)$. In view (\ref{dom}), Eq. (\ref{Xom}) is equivalent to
$$
\delta i_X\omega\simeq 0\,.
$$
Again, because of the triviality of the relative $\delta$-cohomology, we conclude that
\begin{equation}\label{HVF}
i_X\omega \simeq \delta H
\end{equation}
for some $H\in \widetilde{\Lambda}^{0,m}(J^\infty E)$. We refer to $H$ as a \textit{Hamiltonian form} (or   \textit{Hamiltonian}) associated with $X$.  Sometimes, to indicate the relationship between the Hamiltonian vector fields and forms, we will write $X_H$ for $X$.  In general, the relationship is  far from being one-to-one. On the one hand, we are free to add to $X$ any vector field from $\ker \omega$ keeping the Hamiltonian $H$ intact, and on the other we can add to $H$ any element of $\Lambda^m(M)$ whenever $\widetilde{H}=0$ and $\deg H=0$.

The space ${\Lambda}_\omega^{0,m}(J^\infty E)$ of all Hamiltonian $m$-forms can be endowed with the structure of a Lie algebra. The corresponding Lie bracket is defined as follows: If $X_A$ and $X_B$ are two Hamiltonian vector fields associated with the Hamiltonian forms $A$ and $B$, then
\begin{equation}\label{PB}
\{A,B\}=(-1)^{\widetilde{X}_A}i_{X_A}i_{X_B}\omega\,.
\end{equation}
The next proposition shows that the bracket is well defined and possesses  all the required properties.

\begin{proposition}\label{2.1}
The bracket (\ref{PB}) is bilinear over reals, maps the Hamiltonian forms to Hamiltonian ones, enjoys the symmetry property
\begin{equation}\label{sym}
\{A,B\}\simeq -(-1)^{(\widetilde{A}+\widetilde{\omega})(\widetilde{B}+\widetilde{\omega})}\{B,A\}\,,
\end{equation}
and obeys  the Jacobi identity
\begin{equation}\label{jac}
\{C,\{A,B\}\}\simeq \{\{C,A\},B\}+(-1)^{(\widetilde{C}+\widetilde{\omega})(\widetilde{A}+\widetilde{\omega})}\{A,\{C,B\}\}\,.
\end{equation}
\end{proposition}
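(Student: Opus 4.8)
The plan is to reduce every assertion to standard identities for the interior product and Lie derivative along evolutionary vector fields, regarded as operators on the quotient complex $\widetilde{\Lambda}=\Lambda/d\Lambda$. The facts I lean on throughout are the graded Cartan formula $L_X=[i_X,\delta]$ and the bracket identities $i_{[X,Y]}=[L_X,i_Y]$, $L_{[X,Y]}=[L_X,L_Y]$, together with the observation that the prolongation of an evolutionary field commutes with the horizontal differential $d$, so that $i_X$ and $L_X$ descend to $\widetilde{\Lambda}$ and preserve $\simeq$. I also record the parity bookkeeping implied by (\ref{HVF}): contracting the $(2,m)$-form $\omega$ with $X_A$ and comparing with $\delta A$ forces $\widetilde{X}_A=\widetilde{A}+\widetilde{\omega}$, which is exactly the combination appearing in (\ref{sym}) and (\ref{jac}).

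The computational engine is a single reformulation of the bracket. Since $i_{X_B}\omega\simeq\delta B$ by (\ref{HVF}) and $i_{X_A}B=0$ (the Hamiltonian $B$ has vertical degree zero), Cartan's formula collapses to
\begin{equation*}
\{A,B\}=(-1)^{\widetilde{X}_A}i_{X_A}i_{X_B}\omega\simeq(-1)^{\widetilde{X}_A}i_{X_A}\delta B=(-1)^{\widetilde{X}_A}L_{X_A}B\,.
\end{equation*}
From this, bilinearity over the reals is immediate, since $H\mapsto X_H$ is linear modulo $\ker\omega$ and the bracket is insensitive to that ambiguity: replacing $X_B$ by $X_B+Y$ with $i_Y\omega\simeq 0$ changes $\{A,B\}$ by $\pm\,i_{X_A}i_Y\omega\simeq 0$, because $i_{X_A}$ preserves $\simeq$. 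The symmetry (\ref{sym}) is then a bookkeeping exercise: the interior products super-commute, $i_{X_A}i_{X_B}\omega=(-1)^{(\widetilde{X}_A+1)(\widetilde{X}_B+1)}i_{X_B}i_{X_A}\omega$, and collecting the prefactors $(-1)^{\widetilde{X}_A}$, $(-1)^{\widetilde{X}_B}$ together with $\widetilde{X}_A=\widetilde{A}+\widetilde{\omega}$ reproduces exactly the sign $-(-1)^{(\widetilde{A}+\widetilde{\omega})(\widetilde{B}+\widetilde{\omega})}$.

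To see that the bracket is Hamiltonian-valued I exhibit its Hamiltonian vector field as the commutator $[X_A,X_B]$. Applying $i_{[X_A,X_B]}=[L_{X_A},i_{X_B}]$ to $\omega$, the term containing $L_{X_A}\omega$ drops out by (\ref{Xom}), while $L_{X_A}i_{X_B}\omega\simeq L_{X_A}\delta B=(-1)^{\widetilde{X}_A}\delta L_{X_A}B=\delta\{A,B\}$, whence
\begin{equation*}
i_{[X_A,X_B]}\omega\simeq\delta\{A,B\}\,.
\end{equation*}
As the Hamiltonian fields form a subalgebra, $[X_A,X_B]$ is itself Hamiltonian, so $\{A,B\}$ is a Hamiltonian form with $X_{\{A,B\}}\simeq[X_A,X_B]$. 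Finally, the Jacobi identity (\ref{jac}) follows by feeding this homomorphism property into the reformulation above: writing all double brackets as iterated Lie derivatives $L_{X_C}L_{X_A}B$, etc., and using $L_{[X_C,X_A]}=[L_{X_C},L_{X_A}]$, the identity reduces to the graded Jacobi identity for the commutator of evolutionary fields, with the factor $(-1)^{(\widetilde{C}+\widetilde{\omega})(\widetilde{A}+\widetilde{\omega})}=(-1)^{\widetilde{X}_C\widetilde{X}_A}$ accounting for the transposition of $X_C$ past $X_A$.

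I expect the main obstacle to be the sign bookkeeping rather than any conceptual difficulty, and specifically the step $i_{[X_A,X_B]}\omega\simeq\delta\{A,B\}$: one must pin down the correct graded signs in Cartan's formula and in $L_{X_A}\delta=(-1)^{\widetilde{X}_A}\delta L_{X_A}$, since it is precisely these signs that conspire to remove any stray factor from $X_{\{A,B\}}\simeq[X_A,X_B]$ and thereby make the vector-field Jacobi identity transcribe cleanly. A secondary point requiring care is that $X_H$ is defined only modulo $\ker\omega$ and $H$ only modulo $\Lambda^m(M)$; one must check at each step that the constructions descend to $\widetilde{\Lambda}$ and are independent of these choices, which is guaranteed by the fact that evolutionary fields commute with $d$.
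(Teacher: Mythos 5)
Your proof is correct and follows essentially the same route as the paper's: both arguments run on Cartan calculus for evolutionary vector fields descended to the quotient complex, establish $i_{[X_A,X_B]}\omega\simeq\delta\{A,B\}$ as the key intermediate fact (so that $X_{\{A,B\}}=[X_A,X_B]$ modulo $\ker\omega$), and then read off the Jacobi identity from the Lie-algebra structure of the Hamiltonian vector fields, with the same parity bookkeeping $\widetilde{X}_H=\widetilde{H}+\widetilde{\omega}$. The only differences are organizational: the paper extracts the key fact by expanding $0\simeq i_{X_A}i_{X_B}\delta\omega$ (using the presymplectic condition) rather than your $[L_{X_A},i_{X_B}]\omega$ with $L_{X_A}\omega\simeq 0$, and it writes the Jacobi computation in terms of interior products where you use iterated Lie derivatives and $L_{[X,Y]}=[L_X,L_Y]$ --- the same identities applied in a slightly different order.
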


\begin{proof}
Bilinearity is obvious. It is also clear that (\ref{PB}) does not
depend on the choice of the Hamiltonian vector fields $X_A$ and
$X_B$.

Now the symmetry property follows from the chain of relations
$$
\begin{array}{rcl}
\{A,B\}&\simeq& (-1)^{\widetilde{X}_A}i_{X_A}i_{X_B}\omega=(-1)^{\widetilde{X}_A+(\widetilde{X}_A+1)(\widetilde{X}_B+1)}i_{X_B}i_{X_A}\omega\\[2mm]
&\simeq& (-1)^{\widetilde{X}_A+\widetilde{X}_B+(\widetilde{X}_A+1)(\widetilde{X}_B+1)}\{B,A\}\\[2mm]
&=&-(-1)^{(\widetilde{A}+\widetilde{\omega})(\widetilde{B}+\widetilde{\omega})}\{B,A\}\,.
\end{array}
$$
Here we used the equality $\widetilde{X}_H=\widetilde{H}+\widetilde{\omega}$, which readily follows from the definition (\ref{HVF}).

In order to prove the remaining assertions consider the following equalities:
$$
\begin{array}{rl}
0&\simeq i_{X_A}i_{X_B}\delta \omega = i_{X_A}L_{X_B} \omega -(-1)^{\widetilde{X}_B}i_{X_A}\delta i_{X_B}\omega\\[2mm]
&=(-1)^{(\widetilde{X}_A+1)\widetilde{X}_B} (L_{X_B}i_{X_A}\omega-i_{[X_B,X_A]}\omega) -(-1)^{\widetilde{X}_B}i_{X_A}\delta i_{X_B}\omega\\[2mm]
&\simeq (-1)^{(\widetilde{X}_A+1)\widetilde{X}_B} (L_{X_B}\delta A-i_{[X_B,X_A]}\omega) - (-1)^{\widetilde{X}_B}i_{X_A}\delta^2 B\\[2mm]
&\simeq (-1)^{(\widetilde{X}_A+1)\widetilde{X}_B}( (-1)^{\widetilde{X}_B}\delta L_{X_B} A-i_{[X_B,X_A]}\omega)\\[2mm]
&\simeq (-1)^{(\widetilde{X}_A+1)\widetilde{X}_B}(\delta \{B,A\}-i_{[X_B,X_A]}\omega)\,.
\end{array}
$$
 We see that the form $\{A,B\}$ is Hamiltonian and  corresponds to the Hamiltonian vector field $[X_A,X_B]$. At the same time this proves the Jacobi identity
$$
\begin{array}{rcl}
\{C,\{A,B\}\}&=&(-1)^{\widetilde{X}_C}L_{X_C}\{A,B\}=(-1)^{\widetilde{X}_C+\widetilde{X}_A}L_{X_C}i_{X_A}i_{X_B}\omega\\[2mm]
&\simeq&(-1)^{\widetilde{X}_C+\widetilde{X}_A}i_{[X_C,X_A]}i_{X_B}\omega +(-1)^{\widetilde{X}_C+\widetilde{X}_A+\widetilde{X}_C(\widetilde{X}_A+1)}i_{X_A}L_{X_C}i_{X_B}\omega\\[2mm]
&\simeq& \{\{C,A\},B\}+(-1)^{\widetilde{X}_C\widetilde{X}_A+\widetilde{X}_A}i_{X_A}i_{[X_C, X_B]}\omega \\[2mm] &+&(-1)^{\widetilde{X}_C+\widetilde{X}_A+\widetilde{X}_C(\widetilde{X}_A+1)+\widetilde{X}_C(\widetilde{X}_B+1)}i_{X_A}i_{X_B}L_{X_C}\omega\\[2mm]
&\simeq& \{\{C,A\},B\}+(-1)^{\widetilde{X}_A\widetilde{X}_C}\{A,\{C,B\}\}\,.
\end{array}
$$

\end{proof}

\subsection{Classical BRST differential}\label{brst}
An odd evolutionary  vector field $Q$ on $J^\infty E$ is called \textit{homological}\footnote{Some authors prefer the term \textit{cohomological} \cite{CatSch}.} if
\begin{equation}\label{QQ}
[Q,Q]=2Q^2=0\,, \qquad \deg\,Q=1\,.
\end{equation}
The Lie derivative along the homological vector field $Q$ will be denoted by $\delta_Q$. It follows from the definition that $\delta_Q^2=0$. Hence, $\delta_Q$ is a differential of the algebra $\Lambda^{\ast,\ast}(J^\infty E)$ increasing the $\mathbb{Z}$-degree by 1. Moreover, the operator $\delta_Q$ anticommutes with the coboundary operators $d$ and $\delta$:
$$
\delta_Q d+d\delta_Q=0\,,\qquad \delta_Q \delta+\delta\delta_Q=0\,.
$$
This allows us to speak of the tricomplex $\Lambda^{\ast,\ast,\ast}(J^\infty E; d, \delta, \delta_Q)$, where
$$
\delta_Q: \Lambda^{p,q,r}(J^\infty E)\rightarrow \Lambda^{p,q,r+1}(J^\infty E)\,.
$$

 In the physical literature the homological vector field $Q$ is known as the \textit{classical BRST differential}  and the $\mathbb{Z}$-grading is called the \textit{ghost number}. These are the two main ingredients of all modern approaches to the covariant quantization of gauge theories.  In the BV formalism, for example, the BRST differential carries all the information about equations of motions, their  gauge symmetries and identities, and the space of physical observables is naturally identified with the group  $H^{0,{n},0}(J^\infty E; \delta_Q/d) $ of ``$\delta_Q$ modulo $d$'' cohomology in ghost number zero. For general non-Lagrangian gauge theories the classical BRST differential was systematically defined in \cite{LS0}, \cite{KazLS}.

The equations of motion of a gauge theory can be recovered by considering the zero locus of the homological vector field $Q$. In terms of adapted coordinates $(x^i, \phi^a_I)$ on $J^\infty E$ the vector  field $Q$, being evolutionary, assumes the form
$$
Q=\partial_I Q^a\frac{\partial}{\partial \phi_I^a}\,.
$$
Then there exists an integer $l$ such that the equations
$$
\partial_I Q^a=0\,,\qquad |I|=k\,,
$$
define a submanifold $\Sigma^k\subset J^{l+k}E$. The standard regularity condition imposed usually on $Q$ is that $\Sigma^0$ is a smooth, closed subbundle of $J^l E$, and  $\Sigma^{k+1}$ fibers over $\Sigma^k$ for each $k$. This gives the infinite sequence of projections

$$
\xymatrix{\cdots\ar[r]& \Sigma^{l+3}\ar[r]&\Sigma^{l+2}\ar[r]&\Sigma^{l+1}\ar[r]&\Sigma^l}\rightarrow M\,,
$$
which enables us to define the zero locus of $Q$ as the inverse limit
 $$
 \Sigma^\infty =\lim_{\longleftarrow}\Sigma^k\,.
 $$
In physics, the submanifold $\Sigma^\infty\subset J^\infty E$ is usually referred to as  the \textit{shell}. The terminology is justified by the fact that the classical field equations as well as their differential consequences can be written as \footnote{In the conventional BRST theory of variational gauge systems, the relationship between the zero locus of the classical BRST differential and solutions to the classical equations of motion was studied in \cite{GST}.}
$$
(j^{\infty}\phi)^\ast (\partial_I Q^a)=0\,.
$$
In other words, the field $\phi\in \Gamma(E)$ satisfies the classical equations of motion  iff $j^\infty \phi \in \Sigma^\infty$.

Unlike $\Sigma^k$, the shell $\Sigma^\infty$ is invariant under the action of $Q$ as one can readily see from (\ref{QQ}). This makes possible to pull the ``free'' variational tricomplex $\Lambda^{\ast,\ast,\ast}(J^\infty E; d,\delta,\delta_Q)$ back to $\Sigma^\infty$ and so define the \textit{on-shell tricomplex} $\Lambda^{\ast,\ast,\ast}(\Sigma^\infty; d,\delta,\delta_Q)$.
 The latter  is not generally $d$-exact even locally and this gives rise to various interesting cohomology groups associated with gauge dynamics. For example, the elements of the group $H^{0,n-1,0}(\Sigma^\infty; d)$ are naturally identified with the nontrivial conservation laws. Interpretation of some other groups can be found in \cite{BBH}, \cite{KLS2}.  We will not further expand on the properties of the on-shell tricomplex as in our subsequent considerations we  mostly deal with the free variational tricomplex.

\subsection{$Q$-invariant presymplectic structure and its descendants}
By a \textit{gauge system} on $J^\infty E$ we will mean a pair $(Q, \omega)$ consisting of a homological vector field $Q$ and a $Q$-invariant presymplectic $(2,m)$-form $\omega$. In other words, the vector field $Q$ is supposed to be Hamiltonian with respect to $\omega$, so that $\delta_Q\omega\simeq 0$. The last relation implies the existence  of  forms $\omega_1$, $H$, and $\theta_1$ such that
\begin{equation}\label{des}
\delta_Q \omega=d\omega_1\,, \qquad   i_Q\omega =\delta H +d\theta_1\,.
\end{equation}
 As was mentioned in Sec.\ref{PrSt}, we can always assume that $\omega =\delta\theta$ for some presymplectic potential $\theta$, so that  $\delta\omega=0$. Then applying $\delta$ to the second equality in (\ref{des}) and using the first one, we find $d(\omega_1-\delta\theta_1)=0$. On account of the exactness of the variational bicomplex, the last relation is equivalent to
$$
\omega_1\simeq \delta\theta_1\,.
$$
Thus, $\omega_1$ is a presymplectic $(2,m-1)$-form on $J^\infty E$ coming from the presymplectic potential $\theta_1$. Furthermore, the form $\omega_1$ is $Q$-invariant as one can easily see by applying $\delta_Q$ to the first equality in (\ref{des}) and using once again the fact of exactness of the variational bicomplex. Let $H_1$ denote the Hamiltonian for $Q$ with respect to $\omega_1$, i.e.,
$$
i_Q\omega_1\simeq \delta H_1\,, \qquad H_1\in \widetilde{\Lambda}^{0,m-1}(J^\infty E)\,.
$$
It follows from the definitions that
$$
\begin{array}{ll}
 \widetilde{\omega}_1=\widetilde{\omega}+1 \quad (\mathrm{mod}\; 2)\,,& \qquad \deg \omega_1=\deg \omega+1\,,\\[2mm]
\widetilde{H}_1=\widetilde{H}+1 \quad (\mathrm{mod}\; 2)\,,&\qquad \deg H_1=\deg H+1\,.
\end{array}
$$

Given the pair $(Q,\omega)$, we call $\omega_1$ the \textit{descendent presymplectic structure} on $J^\infty E$ and refer to $(Q,\omega_1)$ as the \textit{descendent gauge system}.

The next proposition provides an alternative definition for the descendent Hamiltonian of the homological vector field.
\begin{proposition}\label{p2}
Let $\omega$ be a $\delta$-closed representative of a presymplectic $(2,m)$-form on $J^\infty E$ and $\mathrm{deg} H_1\neq 0$, then
\begin{equation}\label{HH}
dH_1=-\frac12\{H,H\}\,.
\end{equation}
\end{proposition}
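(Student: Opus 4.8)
The target (\ref{HH}) is an honest equality of $(0,m)$-forms, not a statement modulo $d$, so the plan is to pin down the representatives rather than argue up to $d$-exact terms. First I would record that, since $i_Q\omega\simeq\delta H$, the Hamiltonian vector field of $H$ is $Q$ itself, whence by (\ref{PB})
\[
\{H,H\}=(-1)^{\widetilde{X}_H}i_{X_H}i_{X_H}\omega=-\,i_Qi_Q\omega,
\]
using $\widetilde{X}_H=\widetilde Q=1$. Thus it suffices to establish the form identity $i_Qi_Q\omega=2\,dH_1$. At this point I would flag the one genuinely delicate issue, namely the graded sign rules: because $Q$ is odd, the contraction $i_Q$ lowers the vertical degree by one but is \emph{even} as a total operation, so that $i_Q^2\neq0$ in general (this is precisely why $\{H,H\}$ need not vanish) and $i_Q$ \emph{commutes} rather than anticommutes with both $\delta_Q$ and $d$, i.e. $\delta_Qi_Q=i_Q\delta_Q$ and $d\,i_Q=i_Q\,d$. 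These relations, together with the Cartan formula $i_Q\delta=\delta_Q+\delta i_Q$ read off from the proof of Proposition \ref{2.1}, are the only structural inputs.

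Next I would exploit that $\omega$ may be taken $\delta$-exact, $\omega=\delta\theta$. Writing $i_Qi_Q\omega=i_Qi_Q\delta\theta$ and pushing one $i_Q$ through $\delta$ twice gives, after using $i_Q^2\theta=0$ and $\delta_Qi_Q=i_Q\delta_Q$, the compact identity
\[
i_Qi_Q\omega=2\,\delta_Q(i_Q\theta).
\]
So the whole proposition reduces to showing $\delta_Q(i_Q\theta)=dH_1$. To compare these two $(0,m)$-forms I would differentiate them vertically. Using $\delta(i_Q\theta)=i_Q\omega-\delta_Q\theta$ together with $\delta_Q\omega=d\omega_1$ from (\ref{des}) one finds $\delta\,\delta_Q(i_Q\theta)=-i_Qd\omega_1$, while from $i_Q\omega_1\simeq\delta H_1$ one gets $\delta(dH_1)=-d\,i_Q\omega_1$. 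Since $i_Q$ commutes with $d$, these two expressions coincide, so the difference $R:=\delta_Q(i_Q\theta)-dH_1$ is $\delta$-closed.

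Finally I would upgrade ``$\delta$-closed'' to ``zero'' by a degree count, which is where the hypothesis $\deg H_1\neq0$ enters decisively. A $\delta$-closed form of vertical degree $0$ has coefficients depending on the base coordinates only, i.e. it is the pullback of an element of $\Lambda^m(M)$, and every such form carries $\mathbb Z$-degree $0$. On the other hand, bookkeeping of the gradings gives $\deg\delta_Q(i_Q\theta)=\deg\omega+2=\deg H_1$, so $R$ has $\mathbb Z$-degree $\deg H_1\neq0$ and must therefore vanish. This yields $\delta_Q(i_Q\theta)=dH_1$ and hence (\ref{HH}). The main obstacle throughout is not the chain of manipulations but keeping the graded signs honest --- especially the even, commuting nature of $i_Q$ for odd $Q$, which reverses the naive anticommutation signs --- and recognizing that the nonzero-degree assumption is exactly what removes the residual $\Lambda^m(M)$ ambiguity in $H_1$.
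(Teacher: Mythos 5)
Your proof is correct: all the structural inputs you flag hold in the paper's conventions ($i_Q$ is even and commutes with both $d$ and $\delta_Q$ because $Q$ is odd, evolutionary, and $[Q,Q]=0$; the Cartan formula reads $i_Q\delta=\delta_Q+\delta i_Q$), and your key computations $i_Q^2\delta\theta=2\,\delta_Q(i_Q\theta)$, $\delta\,\delta_Q(i_Q\theta)=-i_Qd\omega_1$ and $\delta(dH_1)=-d\,i_Q\omega_1$ all check out, as does the grading bookkeeping $\deg\delta_Q(i_Q\theta)=\deg\omega+2=\deg H_1$. The overall strategy coincides with the paper's: both proofs reduce (\ref{HH}) to the observation that a $\delta$-closed $(0,m)$-form of $\mathbb{Z}$-degree $\deg H_1\neq 0$ must vanish, since column exactness of the augmented bicomplex forces such a form into $\pi^\ast_\infty\Lambda^m(M)$, which sits in degree zero. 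Where you differ is the route to $\delta$-closedness. The paper runs a single chain of equalities starting from $0=i_Q^2\delta\omega$, using only $\delta\omega=0$ and the strong equalities (\ref{des}), and lands directly on $\delta\bigl(2dH_1+\{H,H\}\bigr)=0$. You instead invoke column exactness a second time to write $\omega=\delta\theta$, extract the compact identity $i_Q^2\omega=2\,\delta_Q(i_Q\theta)$, and then verify that the difference $R=\delta_Q(i_Q\theta)-dH_1$ is $\delta$-closed. Your detour buys a sharper intermediate statement --- the proposition becomes $dH_1=\delta_Q(i_Q\theta)$, exhibiting $dH_1$ as the BRST variation of $i_Q\theta$ --- at the price of one extra appeal to exactness; and since in vertical degree two ``$\delta$-closed'' and ``$\delta$-exact'' coincide, neither version is more general than the other.
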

\begin{proof}
$$
\begin{array}{rcl}
0&=&i^2_Q\delta\omega=i_Q(\delta_Q+\delta i_Q)\omega=(\delta_Q+i_Q\delta)i_Q\omega\\[2mm]
&=&(2i_Q\delta-\delta i_Q)(\delta H+d\theta)=2i_Q\delta d\theta-\delta i_Q\delta H-\delta i_Qd\theta\\[2mm]
&=&-2di_Q\delta\theta-\delta(\delta_QH+di_Q\theta)=-2di_Q\omega_1-\delta i^2_Q\omega\\[2mm]
&=&-2d\delta H_1-\delta i^2_Q\omega =\delta (2dH_1+\{H,H\})\,.
\end{array}
$$
Hence $(2dH_1+\{H,H\})\in \Lambda^m(M)$. Since  $\deg
(2dH_1+\{H,H\})=\mathrm{deg} H_1\neq 0$, while $\Lambda^m(M)$
belongs to $\mathbb{Z}$-degree zero, we conclude that
$2dH_1+\{H,H\}=0$.
\end{proof}

\begin{corollary}\label{c1}
$H$ is a Maurer-Cartan element of the Lie algebra $\Lambda^{0,m}_\omega(J^\infty E)$, that is, $$\{H,H\}\simeq 0\,.$$
\end{corollary}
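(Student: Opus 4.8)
The plan is to read the corollary directly off Proposition~\ref{p2}, which has just been established. First I would identify the Hamiltonian vector field attached to $H$: since the second relation in (\ref{des}) reads $i_Q\omega=\delta H+d\theta_1$, we have $i_Q\omega\simeq\delta H$, so the homological field $Q$ is itself the Hamiltonian vector field $X_H$ of $H$. Evaluating the bracket (\ref{PB}) on the pair $(H,H)$ and using $\widetilde{X}_H=\widetilde{Q}=1$ then gives
$$
\{H,H\}=(-1)^{\widetilde{Q}}\,i_Qi_Q\omega=-i_Q^2\omega\,,
$$
so the whole question is reduced to controlling $i_Q^2\omega$ modulo $d$.

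The quickest route is then to invoke (\ref{HH}). Proposition~\ref{p2} asserts $dH_1=-\frac12\{H,H\}$, and rearranging this identity gives $\{H,H\}=-2dH_1$. The right-hand side is manifestly $d$-exact, so by the very definition of the relation $\simeq$ we conclude $\{H,H\}\simeq 0$, which is exactly the claim. Once Proposition~\ref{p2} is available, no further computation is needed.

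The one point that needs care — and the main obstacle — is the hypothesis $\deg H_1\neq 0$ under which (\ref{HH}) was proved. Inspecting that proof, the fact genuinely established is only that $2dH_1+\{H,H\}\in\Lambda^m(M)$; it is the degree hypothesis that forces this $\Lambda^m(M)$-term to vanish. When $\deg H_1=0$ (equivalently $\deg\omega=-2$) the term may survive and one obtains merely $\{H,H\}\simeq c$ for some $c\in\Lambda^m(M)$. To close this gap I would argue directly rather than through Proposition~\ref{p2}: using $i_Q\omega=\delta H+d\theta_1$, the Cartan identity $L_Q=i_Q\delta-\delta i_Q$ (valid for the odd field $Q$), the vanishing $i_QH=0$ since $H$ has vertical degree zero, and the graded commutation of $i_Q$ with $d$, one finds $\{H,H\}\simeq-\delta_Q H$; applying $\delta_Q$ to $i_Q\omega\simeq\delta H$ and using $[Q,Q]=2Q^2=0$ from (\ref{QQ}) together with the $Q$-invariance $\delta_Q\omega=d\omega_1$ shows $\delta(\delta_Q H)\simeq 0$, i.e. $\delta_Q H$ is horizontal modulo $d$, landing again in $\Lambda^m(M)$. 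Thus the same residual $\Lambda^m(M)$-ambiguity reappears, confirming that the degenerate case genuinely requires an extra input (a de Rham argument on $M$, or additional hypotheses) to be killed. In the situations relevant to the BV/BFV correspondence, however, $\deg\omega\in\{-1,0\}$ so that $\deg H_1\neq0$, and the one-line derivation from Proposition~\ref{p2} already settles the corollary.
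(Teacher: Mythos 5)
Your proof is correct and follows the paper's own route: Corollary \ref{c1} is read off immediately from Proposition \ref{p2}, since $\{H,H\}=-2dH_1$ is $d$-exact and hence $\simeq 0$ by the definition of $\simeq$. Your further observation that the hypothesis $\deg H_1\neq 0$ is genuinely needed to kill the residual $\Lambda^m(M)$-term (and is implicitly inherited by the corollary, being harmless in the BV/BFV situations where $\deg\omega\in\{-1,0\}$) is a fair refinement that the paper leaves unstated, but it does not alter the argument.
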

Since $H$ and $\omega$ have opposite Grassmann parities, the last equality is not a trivial consequence of the symmetry property (\ref{sym}).

\begin{corollary}\label{CL}
The Hamiltonian form $H_1$ is $d$-closed on-shell. In particular, for $m=n$ it defines a conservation law.
\end{corollary}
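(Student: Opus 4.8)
The plan is to derive the on-shell closedness directly from Proposition \ref{p2}, the point being that the right-hand side of the identity $dH_1=-\tfrac12\{H,H\}$ is assembled solely out of contractions with the homological vector field $Q$, which by construction vanishes on the shell.

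First I would identify the Hamiltonian vector field of $H$. The descendent relation $i_Q\omega=\delta H+d\theta_1\simeq\delta H$ together with the defining property \eqref{HVF} shows that $Q$ is a Hamiltonian vector field associated with $H$; since the bracket \eqref{PB} is independent of the choice of Hamiltonian vector field, I may simply take $X_H=Q$. Substituting this into \eqref{PB} and using $\widetilde Q=1$ gives
\[
\{H,H\}=(-1)^{\widetilde{X}_H}i_{X_H}i_{X_H}\omega=-\,i_Qi_Q\omega\,,
\]
so that \eqref{HH} becomes $dH_1=\tfrac12\,i_Qi_Q\omega$. Every term of the right-hand side then carries two factors of the components $\partial_IQ^a$ of $Q$.

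Next I would pull back to the shell. Denoting by $\iota\colon\Sigma^\infty\hookrightarrow J^\infty E$ the inclusion, the coefficients $\partial_IQ^a$ vanish identically on $\Sigma^\infty$ by the very definition of the zero locus of $Q$, whence $\iota^\ast(i_Qi_Q\omega)=0$. Since the pullback commutes with the horizontal differential $d$, this yields $d(\iota^\ast H_1)=\iota^\ast(dH_1)=0$, equivalently $(j^\infty\phi)^\ast dH_1=0$ on solutions; this is precisely the assertion that $H_1$ is $d$-closed on-shell. When $m=n$ the form $H_1$ has bidegree $(0,n-1)$, so $\iota^\ast H_1$ is a $d$-closed horizontal $(n-1)$-form on the shell, representing a class in $H^{0,n-1,\ast}(\Sigma^\infty;d)$; by the identification recorded in Sec.\ref{brst} it defines a conservation law.

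No serious obstacle arises here: once $X_H=Q$ is recognized, the statement is essentially a one-line consequence of Proposition \ref{p2}. The only points requiring care are, first, that the corollary inherits the hypothesis $\deg H_1\neq0$ of Proposition \ref{p2} (valid in the generic case $\deg H=0$, where $\deg H_1=1$); and second, that ``$d$-closed on-shell'' must be read as closedness after restriction to $\Sigma^\infty$, and not as off-shell closedness — indeed, by Corollary \ref{c1} the bracket $\{H,H\}$ is merely $d$-exact away from the shell and becomes genuinely zero only upon restriction to $\Sigma^\infty$, which is exactly where the argument above exploits the vanishing of $Q$.
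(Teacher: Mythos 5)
Your proposal is correct and follows essentially the same route as the paper: the paper likewise rewrites the identity of Proposition \ref{p2} as $dH_1=\pm\frac12\, i_Q^2\omega$ (with $X_H=Q$) and observes that the right-hand side vanishes on the zero locus of $Q$, so that $H_1$ represents a class in $H^{0,m-1,r}(\Sigma^\infty;d)$, identified with conservation laws when $m=n$. The only discrepancy is an immaterial sign (the paper's display $-\frac12 i_Q^2\omega=dH_1$ versus your $dH_1=\frac12 i_Qi_Q\omega$), which has no bearing on the on-shell vanishing argument.
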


Indeed, writing (\ref{HH}) in the form
$$
-\frac12 i^2_Q\omega =dH_1\,,
$$
we see that the l.h.s. vanishes on the zero locus of $Q$. Therefore $H_1$ represents an element of the cohomology group $H^{0, m-1, r}(\Sigma^\infty; d)$ with $r=\deg H_1$. As we have mentioned in Sec.\ref{brst}, the group $H^{0, n-1, r}(\Sigma^\infty; d)$ describes the space of nontrivial conservation laws of the gauge system.

\begin{proposition}\label{p5}
Suppose that the $Q$-invariant presymplectic form $\omega$ of top horizontal degree has the structure
\begin{equation}\label{ff}
\omega=P_{ab}\wedge\delta\phi^a\wedge\delta\phi^b\,,\qquad P_{ab}\in \Lambda^{0,n}(J^\infty E)\,,
\end{equation}
and $H$ is the Hamiltonian of $Q$ with respect to $\omega$. Then the presymplectic potential for the descendent presymplectic (2,n-1)-form  $\omega_1\simeq \delta\theta_1$ is defined  by the equation
\begin{equation}\label{ht1}
\delta H=\delta\phi^a\wedge \frac{\delta H}{\delta\phi^a}-d\theta_1 \,.
\end{equation}
\end{proposition}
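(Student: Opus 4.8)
The plan is to reduce the statement to a direct evaluation of $i_Q\omega$ combined with the uniqueness of source-form representatives in the variational bicomplex. First I would contract the homological field $Q$ with the explicit expression (\ref{ff}). Since $Q$ is evolutionary its characteristic satisfies $i_Q\delta\phi^a=Q^a$, while the purely horizontal coefficients are annihilated by the contraction, $i_QP_{ab}=0$. Applying the graded Leibniz rule to $i_Q(P_{ab}\wedge\delta\phi^a\wedge\delta\phi^b)$ then collects the two terms into a single expression of the form $i_Q\omega=\delta\phi^b\wedge J_b$, where each $J_b\in\Lambda^{0,n}(J^\infty E)$ is built algebraically from $P_{ab}$ and $Q^a$ (up to a sign and a symmetrization in the indices that I would not dwell on here).

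The key structural observation is that (\ref{ff}) involves only the zeroth-order contact forms $\delta\phi^a$ and no prolonged forms $\delta\phi^a_I$ with $|I|>0$; consequently the contraction produces no higher contact forms either, so $i_Q\omega$ is a genuine \emph{source form}. On the other hand, the definition (\ref{des}) of the descendent Hamiltonian gives $i_Q\omega\simeq\delta H$, while the first variation formula of the variational bicomplex (see Appendix \ref{A}) provides the decomposition $\delta H\simeq\delta\phi^a\wedge\frac{\delta H}{\delta\phi^a}$ modulo $d$, whose right-hand side is again a source form. Hence $i_Q\omega$ and $\delta\phi^a\wedge\frac{\delta H}{\delta\phi^a}$ are two source forms that agree modulo $d\Lambda^{1,n-1}$.

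At this point I would invoke the fact that the source-form representative of a class in $\Lambda^{1,n}/d\Lambda^{1,n-1}$ is unique, equivalently that a $d$-exact source form vanishes identically (this follows from $I\circ d=0$ and $I=\mathrm{id}$ on source forms, $I$ the interior Euler operator). This upgrades the congruence to a strict equality, $i_Q\omega=\delta\phi^a\wedge\frac{\delta H}{\delta\phi^a}$. Substituting the other expression $i_Q\omega=\delta H+d\theta_1$ taken from (\ref{des}) then yields precisely (\ref{ht1}), showing that the boundary term $\theta_1$ produced by the descendent construction coincides with the integration-by-parts potential of $\delta H$ and is therefore a presymplectic potential for $\omega_1$.

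I expect the last step to be the only delicate one. The direct computation of $i_Q\omega$ and the first variation formula are routine, but the identification hinges on promoting ``equal modulo $d$'' to ``equal'': it is essential that the discrepancy between the two source forms is not merely $d$-closed but actually $d$-exact and hence zero. This injectivity of source forms into the functional quotient is exactly the variational-bicomplex input that singles out the particular $\theta_1$ of (\ref{des}) as the correct presymplectic potential in (\ref{ht1}).
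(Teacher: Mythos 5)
Your proof is correct and takes essentially the same route as the paper's: both rest on observing that the special structure (\ref{ff}) forces $i_Q\omega$ to be a source form, and then using the uniqueness part of Proposition \ref{SF} to promote the congruence $i_Q\omega\simeq\delta H$ to the strict identity $i_Q\omega=\delta\phi^a\wedge\frac{\delta H}{\delta\phi^a}$, which combined with $i_Q\omega=\delta H+d\theta_1$ from (\ref{des}) yields (\ref{ht1}). The only difference is one of detail: you spell out the contraction computation and the injectivity of source forms modulo $d$-exact terms, both of which the paper's proof leaves implicit.
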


\begin{proof}
According to (\ref{des}) the variation of $H$ is given by
\begin{equation}\label{hqt}
\delta H=i_Q\omega-d\theta_1\,.
\end{equation}
Because of the special structure of the presymplectic form (\ref{ff}),   $i_Q\omega$ is a source form. Then in virtue of Proposition \ref{SF} we have
$$
i_Q\omega =\delta\phi^a\wedge \frac{\delta H}{\delta\phi^a}\,.
$$
Combining the last relation with (\ref{hqt}), we get (\ref{ht1}).
\end{proof}

The above construction of the descendent gauge system $(Q,\omega_1)$ can be iterated producing a sequence of gauge systems $(Q, \omega_k)$, where the $k$-th presymlectic form $\omega_k\in {\Lambda}^{2,m-k}(J^\infty E)$ is the descendant of the previous form $\omega_{k-1}$. The minimal $k$ for which $\omega_k \simeq 0$ gives a numerical  invariant of the original gauge system $(Q,\omega)$.

\subsection{Foliated variational tricomplex}
 In previous sections, we have defined the Lie algebra of Hamiltonian forms $\Lambda_\omega^{0,m}(J^\infty E)$  associated with a presymplectic structure $\omega\in \widetilde{ \Lambda}^{2,m}(J^\infty E)$. For gauge systems, this algebra is certainly nonempty as it contains e.g. the Hamiltonian of the classical BRST differential $Q$. It should be noted, however, that in most field-theoretical applications the algebra $\Lambda_\omega^{0,m}(J^\infty E)$ appears too scanty to accommodate all quantities of physical interest whenever $m<n$. On the other hand, the physical quantities are given not by the Hamiltonian $m$-forms \textit{per se}  but their integrals over $m$-chains in $M$. In order to enrich  the algebra of Hamiltonian $m$-forms  one can try to restrict the set of admissible $m$-chains of $M$. By duality, this must extend the space of admissible $m$-cochains, i.e., Hamiltonian $m$-forms.  The physical motivation  for such a restriction comes basically from the relationship between the Lagrangian and Hamiltonian formalisms in field theory. In the Lagrangian picture, the physical quantities are described by local functionals of fields, i.e., integrals over the $n$-dimensional space-time manifold $M$. Passing to the Hamiltonian formalism one has to split the original space-time $M$ into space and time by choosing a global time function. The $n$-dimensional manifold $M$ is then foliated by $(n-1)$-dimensional time slices. On the one hand, these slices play the role of the Cauchy hypersurfaces for the field equations, and on the other hand they represent $(n-1)$-chains of $M$ over which the Hamiltonian $(n-1)$-forms are to be integrated to produce physical quantities  in the Hamiltonian picture. Thus, upon choosing a causal structure on $M$, the admissible $(n-1)$-chains are not arbitrary, rather they constitute a one-parameter family of hypersurfaces in $M$. The relationship between the  Lagrangian and Hamiltonian formulations  of gauge dynamics will be discussed more fully in Sec.\ref{BV-BFV}.

 Proceeding from the above line of reasoning, we can now formulate the general definition of a foliated variational tricomplex.  Consider an infinite jet bundle $\pi_\infty : J^\infty E\rightarrow M$ whose base $M$ is equipped with the structure of a smooth $m$-dimensional foliation $\mathcal{F}(M)$. We assume  the annihilating ideal of forms $I(\mathcal{F})\subset \Lambda^\ast(M)$ to be algebraically generated by a set of linearly independent 1-forms $\eta^A$, $A=1,\ldots, n-m$, such that $\eta^A|_C=0$ for each leaf $C\in \mathcal{F}(M)$. By Frobenius theorem $d I\subset I$ or, what is the same,
 $$
 d\eta^A=\eta^A_B\wedge \eta^B
 $$
for some $\eta^A_B\in \Lambda^1(M)$. Thus, $\mathcal{I}(\mathcal{F})$ is a differential ideal of $\Lambda^\ast(M)$. The quotient $\Lambda^\ast(\mathcal{F})=\Lambda^\ast(M)/I(\mathcal{F})$ is known as the algebra of differential forms along $\mathcal{F}$.

The  ideal $I(\mathcal{F}) \subset \Lambda^\ast(M)$ can be lifted to an ideal $\mathcal{I}(\mathcal{F})$ in $\Lambda^{\ast,\ast}(J^\infty E)$.  As a linear space, $\mathcal{I}(\mathcal{F})$ is generated by the wedge products
$$
\alpha\wedge \beta\,,\qquad \alpha \in \Lambda^{\ast,\ast}(J^\infty E)\,, \qquad \beta \in \pi^\ast_\infty I(\mathcal{F})\,.
$$
 By definition we set $\Lambda^{\ast,\ast}_\mathcal{F}(J^\infty E)=\Lambda^{\ast,\ast}(J^\infty E)/\mathcal{I}(\mathcal{F})$. The space $\mathcal{I}(\mathcal{F})$, being obviously  invariant under the action of all three differentials $d$, $\delta$ and $\delta_Q$,  defines a subcomplex  of the variational tricomplex $\Lambda^{\ast,\ast,\ast}(J^\infty E; d,\delta, \delta_Q)$. By a \textit{foliated tricomplex } we mean the quotient complex
$$
\Lambda^{\ast,\ast,\ast}_{\mathcal{F}}(J^\infty E; d,\delta, \delta_Q)=\Lambda^{\ast,\ast,\ast}(J^\infty E; d,\delta, \delta_Q)/\mathcal{I}(\mathcal{F})\,.
$$

Given a vertical vector field $X$, we can define the operator
$$
\Delta_X=i_Xd+(-1)^{\tilde{X}}di_X\,,
$$
$$
\Delta_X: \Lambda^{p,q}(J^\infty E)\rightarrow \Lambda^{p-1,q+1}(J^\infty E)\,.
$$
A vertical vector field $X$ will be called \textit{$\mathcal{F}$-evolutionary} if $\mathrm{Im}\, \Delta_X \in \mathcal{I}(\mathcal{F})$. In other words, for any $\mathcal{F}$-evolutionary vector field $X$ the operators $i_X$ and $d$ (anti)commute modulo $\mathcal{I}(\mathcal{F})$.
The space of all $\mathcal{F}$-evolutionary vector fields will be denoted by ${\frak X}_\mathcal{F}(J^\infty E)$.
Note that any evolutionary vector field is $\mathcal{F}$-evolutionary, so that ${\frak X}_{ev}(J^\infty E)\subset {\frak X}_\mathcal{F}(J^\infty E)$.

By analogy with $\widetilde{\Lambda}^{\ast,\ast,\ast}(J^\infty E; d,\delta,\delta_Q)$ we  define the quotient tricomplex
\begin{equation}\label{quot}
\widetilde{\Lambda}_\mathcal{F}^{\ast,\ast,\ast}(J^\infty E; d,\delta,\delta_Q)
={\Lambda}_\mathcal{F}^{\ast,\ast,\ast}(J^\infty E; d,\delta,\delta_Q)/
d{\Lambda}_\mathcal{F}^{\ast,\ast,\ast}(J^\infty E; d,\delta,\delta_Q)\,.
\end{equation}
The elements of the latter complex are represented by the differential forms on $J^\infty E$ considered modulo $d$-exact forms and forms belonging to $\mathcal{I}(\mathcal{F})$. For notational simplicity we will denote by the same letter an element of the quotient (\ref{quot}) and its representative in  $\Lambda^{\ast,\ast,\ast}(J^\infty E; d,\delta,\delta_Q)$; in so doing, the sign ``$\simeq$'' will stand for the equality  modulo $\mathcal{I}\cup d\Lambda$. For example, the Cartan formula (\ref{CMF}) for the Lie derivative along an $\mathcal{F}$-evolutionary vector field $X$ can be written as
$$
L_X\simeq i_X\delta+(-1)^{\tilde{X}}\delta i_X\,.
$$

In the presence of an $m$-dimensional foliation $\mathcal{F}(M)$ it
is quite natural to consider a presymplectic $(2,m)$-form $\omega$
on $J^\infty E$ as  a $\delta$-closed  element of
$\widetilde{{\Lambda}}^{2,m}_\mathcal{F}(J^\infty E)$. Then the
de\-fi\-nitions of the Hamiltonian vector fields and forms should
be modified as follows. An $\mathcal{F}$-evolutionary vector field
$X$ is called $\mathcal{F}$-Hamiltonian if it preserves the
presymplectic form, i.e., $L_X\omega\simeq 0$. An $m$-form $H \in
\widetilde{{\Lambda}}^{0,m}_\mathcal{F}(J^\infty E)$ is called
$\mathcal{F}$-Hamiltonian if there exists an
$\mathcal{F}$-evolutionary vector field $X_H$ such that $\delta H
\simeq  i_{X_H}\omega$. It should be noted  that each Hamiltonian
vector field or $m$-form on $J^\infty E$ is automatically
$\mathcal{F}$-Hamiltonian. Therefore one can regard the space
$\frak{X}_{\omega,\mathcal{F}}(J^\infty E)$ of
$\mathcal{F}$-Hamiltonian vector fields  and the space
${\Lambda}_{\omega, \mathcal{F}}^{0,m} (J^\infty E)$ of
$\mathcal{F}$-Hamiltonian $m$-forms as extensions of the spaces
$\frak{X}_\omega(J^\infty E)$ and
$\widetilde{\Lambda}^{0,m}_\omega(J^\infty E)$, respectively.

Given a presymplectic $(2,m)$-form $\omega$, the space of $\mathcal{F}$-Hamiltonian $m$-forms is equipped with the Lie bracket (\ref{PB}). This bracket is well defined in ${\Lambda}_{\mathcal{F},\omega}^{0,m} (J^\infty E)$ and offers all  the properties of a Lie bracket as one can easily check by repeating the proof of Proposition \ref{2.1} with the sign ``$\simeq$'' meaning now equality modulo $\mathcal{I}\cup d\Lambda$.

\section{BFV from BV}\label{BV-BFV}

In this section, we apply the formalism developed above to establishing a direct correspondence  between the BV formalism of Lagrangian gauge systems and its Hamiltonian counterpart known as the BFV formalism. We start from a very brief account of both the formalisms in a form suitable for our purposes. For a more systematic exposition of the subject we refer the reader to \cite{HT} as well as to the original papers \cite{BV1}-\cite{BFV3}.

\subsection{BV formalism} The starting point of the BV formalism is an infinite-dimensional manifold $\mathcal{M}_0$ of gauge fields that live on an $n$-dimensional space-time $M$. Depending on a particular structure of gauge symmetry the manifold $\mathcal{M}_0$ is extended to an $\mathbb{N}$-graded manifold $\mathcal{M}$ containing $\mathcal{M}_0$ as its body. The new fields of positive $\mathbb{N}$-degree are called the \textit{ghosts} and the $\mathbb{N}$-grading is referred to as the \textit{ghost number}. Let us collectively  denote all the original fields and ghosts by $\Phi^A$ and refer to them as fields. At the next step the space of fields $\mathcal{M}$ is further extended  by introducing the odd cotangent bundle $\Pi T^\ast[-1]\mathcal{M}$. The fiber coordinates, called \textit{antifields}, are denoted by $\Phi_A^\ast$ and assigned with the following ghost numbers and Grassmann parities:
$$
\mathrm{gh} (\Phi^\ast_A)=-\mathrm{gh} (\Phi^A)-1\,,\qquad \epsilon (\Phi^\ast_A)=\epsilon (\Phi^A)+1 \quad (\mbox{mod}\, 2)\,.
$$
Thus, the total space of the odd cotangent bundle $\Pi T^\ast[-1]\mathcal{M}$ becomes a $\mathbb{Z}$-graded supermanifold.  The canonical symplectic structure on $\Pi T^\ast[-1]\mathcal{M}$ is determined  by  the odd $(2,n)$-form
\begin{equation}\label{ops}
\omega= \delta \Phi_A^\ast\wedge \delta\Phi^A\wedge d^nx\,,
\end{equation}
with $d^nx$ being a volume form on $M$. By definition, $\mathrm{gh} (\omega)= - 1$ and $\epsilon (\omega)=1$. The corresponding odd Poisson bracket in the space of functionals of $\Phi$ and $\Phi^\ast$ is given by
\begin{equation}\label{abr}
(A,B)=\int_M \left(\frac{\delta_r A}{\delta \Phi^A}\frac{\delta_l B}{\delta \Phi^\ast_A}-\frac{\delta_r A}{\delta \Phi^\ast_A}\frac{\delta_l B}{\delta \Phi^A}\right)d^nx\,.
\end{equation}
Here the subscripts $l$ and $r$ refer to the standard left and right functional derivatives.
In the physical literature the above bracket  is called usually the \textit{antibracket} or the \textit{BV bracket}.

The central goal of the BV formalism is the construction of a \textit{master action} $S$ on the space of fields and antifields. This is defined as a proper solution to the \textit{classical master equation}
\begin{equation}\label{BV_MEq}
(S,S)=0\,.
\end{equation}
The functional $S$ is required to be of ghost number zero and start with the action $S_0$ of the original fields to which one couples vertices involving antifields. All these vertices can be found systematically from the master equation (\ref{BV_MEq}) by means of the so-called \textit{homological perturbation theory} \cite{HT}. The existence of a proper solution to the classical master equation in the class of local functionals was proved in \cite{H}.

The classical BRST differential on the space of fields and antifields is canonically generated by the master action through the antibracket:
$$
Q=(S\,,\,\cdot\,)\,.
$$
Because of the master equation for $S$ and the Jacobi identity for the antibracket (\ref{abr}),  the operator $Q$ squares to zero in the space of smooth functionals.  The physical quantities are then identified with the cohomology classes of $Q$ in ghost number zero.

\subsection{BFV formalism} The Hamiltonian formulation of the same gauge dynamics implies a prior splitting $M=\mathbb{R}\times N$ of the original space-time into space and time; the factor $N$ can be viewed as the physical space at a given instant of time. The initial values of the original fields are then considered to form an infinite-dimensional manifold $\mathcal{N}_0$. To allow for possible constraints on the initial data of fields the manifold  $\mathcal{N}_0$ is extended to an $\mathbb{N}$-graded supermanifold $\mathcal{N}$ by adding new fields, called ghosts, of positive $\mathbb{N}$-degree. Then the space of original fields and ghosts is doubled by introducing  the cotangent bundle $T^\ast \mathcal{N}$ endowed with the canonical symplectic structure.     If we denote the local coordinates on $\mathcal{N}$ by $\Phi^a$ and the linear coordinates in the cotangent spaces by $\bar{\Phi}_a$, then the canonical symplectic  structure on $T^\ast \mathcal{N}$  is determined by the following $(2,n-1)$-form:
$$
\omega_1=\delta \bar{\Phi}_a\wedge \delta \Phi^a \wedge d^{n-1}x\,.
$$
Here $d^{n-1}x$ stands for a volume form on $N$. By the definition of the cotangent bundle of a graded manifold
$$
{\mathrm{gh}} ( \bar{\Phi}_a) =-\mathrm{gh}({\Phi}^a)\,,\qquad \epsilon(\bar{\Phi}_a) =\epsilon({\Phi}^a)\,,
$$
so that $\omega_1$ is an even $(2,n-1)$-symplectic form of ghost number zero. The corresponding Poisson bracket in the space of functionals of $\Phi^a$ and $\bar{\Phi}_a$ reads
$$
\{A,B\}=\int_N \left(\frac{\delta_r A}{\delta \Phi^a}\frac{\delta_l B}{\delta \bar{\Phi}_a}-(-1)^{{\widetilde{\Phi}}_a}\frac{\delta_r A}{\delta \bar{\Phi}_a}\frac{\delta_l B}{\delta \Phi^a}\right)d^{n-1}x\,.
$$

 The gauge structure of the original dynamics is encoded by the \textit{classical $BRST$ charge} $\Omega$. This is given by an odd functional of ghost number $1$ satisfying the classical master equation
$$
\{\Omega,\Omega\}=0\,.
$$
  Using the method of Ref. \cite{H} one can show that  in any local gauge theory the classical BRST charge can always be constructed as a local functional (see also \cite{KLS2}). The classical BRST differential in the
extended space of fields and momenta is given now by the Hamiltonian action of the BRST charge:
$$
Q=\{\Omega\,,\,\cdot\,\}\,.
$$
It is clear that $Q^2=0$. The group of $Q$-cohomology in ghost number zero is then naturally identified with the space of physical observables.

\subsection{From BV to BFV} It must be clear from the discussion above that any gauge system  in the BFV formalism may be viewed as the descendant of the same system in the BV formalism. More precisely, we can define the even presymplectic structure $\omega_1$ on the phase space of a gauge theory  as the  descendant  of the odd symplectic structure (\ref{ops}):
$$
d\omega_1 =\delta_Q(\delta \Phi^\ast_A\wedge \delta \Phi^A\wedge d^nx)=\delta \left( \delta\Phi^A\wedge \frac{\delta S}{\delta \Phi^A}+\delta\Phi^\ast_A\wedge \frac{\delta S}{\delta \Phi^\ast_A}\right)\,.
$$
Assuming the space-time manifold $M$ to be foliated by the Cauchy hypersurfaces $N\in \mathcal{F}(M)$, we  treat $\omega_1$ as an element of $\widetilde{\Lambda}^{2,n-1}_\mathcal{F}(J^\infty E)$.  The density of the classical BRST charge is then given by the Hamiltonian $J\in \Lambda^{0,n-1}_{\omega_1,\mathcal{F}}(J^\infty E)$  of the classical BRST differential $Q=(S,\,\cdot\,)$ with respect to the descendent presymplectic structure $\omega_1$, that is,
\begin{equation}\label{J}
\delta J\simeq i_Q\omega_1\,.
\end{equation}
 According to Corollary \ref{CL} the form $J$ represents a conserved current, i.e., the BRST current.

Since the canonical symplectic structure (\ref{ops}) on the space of fields and antifields is $\delta$-exact, we can give an equivalent definition for $J$ in terms of the antibracket (\ref{abr}).  For this end, consider the dynamics of fields in a domain $D\subset M$ bounded by  Cauchy hypersurfaces $N_1$ and $N_2$. The fields and antifields are assumed to vanish on space infinity together with their derivatives. By Proposition \ref{p2}, there is an $(n-1)$-form $J$ such that
$$
-\frac12(S,S)=\int_D dJ=\int_{N_2}J-\int_{N_1}J\,.
$$
The classical BRST charge is then given by the space integral\footnote{As in other formulas of this type, it is understood that the local functional $\Omega$ is to be evaluated at a section $\phi$ of $E$ and that the integrand is pulled back to $M$ via $j^\infty \phi$ before being integrated.}
$$
\Omega =\int_N J\,.
$$
It is clear that $\mathrm{gh}(\Omega)=1$. In virtue of Corollary \ref{c1}, the functional $\Omega$ obeys the classical master equation $\{\Omega,\Omega\}=0$ with respect to the even Poisson bracket associated with $\omega_1$.

Let us illustrate the general construction by a particular example of gauge theory.

\subsection{Example: Maxwell's electrodynamics} In the BV formalism, the free electromagnetic field in $4$-dimensional Minkowski space is described by the master action
\begin{equation}\label{MED}
  S=\int L   \,,\qquad L=-\Big(\frac14 F_{\mu\nu}F^{\mu\nu}+C\partial^\mu A_\mu^\ast\Big)d^4x\,.
\end{equation}
Here
$$
F_{\mu\nu}=\partial_\mu A_\nu-\partial_\nu A_\mu
$$
is the strength tensor of the electromagnetic field, $A^\ast_\mu$ is the antifield to the electromagnetic potential $A_\mu$, and $C$ is the ghost field associated with the standard gauge transformation
$
\delta_\varepsilon A_\mu=\partial_\mu \varepsilon
$.

Since the gauge symmetry is abelian, the master action (\ref{MED}) does not involve the ghost antifield $C^\ast$. The odd symplectic structure (\ref{ops}) on the space of fields and antifields assumes the form
$$
\omega=(\delta A^\ast_\mu\wedge \delta A^\mu+\delta C^\ast\wedge \delta C)\wedge d^4 x\,,\qquad d^4x=dx^0\wedge dx^1\wedge dx^2\wedge dx^3\,,
$$
and the action of the classical BRST differential is given by
\begin{equation}\label{brst-t}
\delta_Q A_\mu =\partial_\mu C\,,\qquad \delta_Q A^\ast_\mu =\partial^\nu F_{\nu\mu}\,,\qquad \delta_Q C=0\,, \qquad \delta_Q C^\ast=\partial^\mu A^\ast_\mu\,.
\end{equation}
The variation of the Lagrangian  density reads
$$
\delta L=(\partial^\mu F_{\mu\nu}\delta A^\nu +  \partial^\mu A^\ast_\mu\delta C +\partial^\mu C\delta A^\ast_\mu  -\partial^\mu \theta_\mu)\wedge d^4x\,,\qquad \theta_\mu =
F_{\mu\nu}\delta A^\nu +C \delta A_\mu^\ast\,.
$$
One can easily check that $i_Q \omega \simeq \delta L$.  By Proposition \ref{p5} the form
$$
\theta_1=-\theta_\mu\wedge d^3x^\mu\,,\qquad d^3x^\mu=\eta^{\mu\nu}i_{\frac{\partial}{\partial x^\nu}}d^4x\,,$$
defines the potential for the descendent presymplectic form
\begin{equation}\label{WC}
\omega_1=\delta\theta_1= -(\delta F_{\nu\mu}\wedge \delta A^\mu+\delta C\wedge \delta A^\ast_\nu)\wedge d^3x^\nu\,.
\end{equation}
(Of course, one  could arrive at the same expression by considering the BRST variation $\delta_Q\omega = d\omega_1$ of the original symplectic structure.) Except for the ghost term the covariant presymplectic structure  (\ref{WC}) for the electromagnetic field  was first introduced in \cite{CW}.

Applying the BRST differential to the form $\omega_1$ yields one more descendent presymplectic form
$$
\omega_2= \delta C\wedge \delta F_{\mu\nu}\wedge d^2x^{\mu\nu}\,, \qquad d^2x^{\mu\nu}=\eta^{\mu\alpha}i_{\frac{\partial}{\partial x^\alpha}}d^3x^\nu\,.
$$
This last form, being ``absolutely'' invariant under the BRST transformations (\ref{brst-t}), leaves no further descendants.

The $3$-form of the conserved BRST current $J$ associated to the BRST symmetry transformations (\ref{brst-t}) is determined by Eq. (\ref{J}). We find
$$
J=J_\nu d^3x^\nu\simeq-C\partial^\mu F_{\mu\nu} d^3x^\nu\,.
$$

In order to obtain the BRST charge and the presymplectic form on the phase space of the theory we need to fix a causal structure on $\mathbb{R}^{3,1}$.  Identifying the coordinate $x^0$ with a global time in the Minkowski space, we set $\eta=dx^0$. The leaves $N$ of the corresponding foliation $\mathcal{F}(\mathbb{R}^{3,1})$ are given by the space-like hyperplanes $x^0=c$.  Then the descendent presymplectic structure is represented by the $(2,3)$-form
\begin{equation}\label{br-em}
\omega_1=-(\delta F_{0 i}\wedge \delta A^i +\delta A_0^\ast\wedge \delta C)\wedge d^3x\,,
\end{equation}
$$
d^3x=dx^1\wedge dx^2\wedge dx^3\,,\qquad i=1,2,3\,.
$$
and the density of the classical BRST charge is represented by the $3$-form
$$
J_0 d^3x\simeq -C
\partial^i F_{i0}d^3x\,.
$$
We see that the zero component of the antifield $A^\ast_\mu$ plays the role of the ghost momentum $\bar{\mathcal{P}}$ canonically conjugate to $C$ and
 the role of the canonical momentum to the $3$-vector $A^i$ is played by the $3$-vector of electric field $E_i=F_{i0}$. The on-shell conservation of the BRST charge $\Omega=\int_N J_0d^3x$ expresses nothing but the Gauss law $\partial^i E_i=0$. One can also see that  $\Omega$ satisfies the classical master equation $\{\Omega,\Omega\}=0$ with respect to the canonical Poisson bracket on the phase space of fields $(A^i, E_i; C, \bar{\mathcal{P}})$.
Furthermore, any $3$-form
$$
f(\partial_I A^i, \partial_I E_i, \partial_IC, \partial_I \bar{\mathcal{P}})d^3x\,,\qquad I=i_1i_2\cdots i_s\,,
$$
appears to be $\mathcal{F}$-Hamiltonian with respect to (\ref{br-em}), so that the Lie algebra of $\mathcal{F}$-Hamiltonian forms is rich enough.  In particular, it includes the energy density of the electromagnetic field
\begin{equation}\label{H}
H=\frac12\Big( E_iE^i+\frac12F_{ij}F^{ij}\Big)d^3x\,.
\end{equation}
It is clear that $\{H,\Omega\}=0$. The last equation implies two things: (i) the physical energy is BRST invariant  and (ii) the BRST charge is invariant with respect to the time evolution generated by the physical Hamiltonian (\ref{H}).

\section{Weak Poisson bracket from the Lagrange structure}

We start with a brief review of the BRST theory of non-variational gauge systems as it was first formulated in \cite{LS0} and \cite{KazLS}.

\subsection{Non-Lagrangian gauge systems and $S_\infty$-algebras} The geometrical arena for the BRST formulation of not necessarily Lagrangian gauge dynamics is provided by the cotangent bundle $T^\ast \mathcal{M}$ of the space of fields $\mathcal{M}$. As before, we consider $\mathcal{M}$ to be an infinite-dimensional $\mathbb{N}$-graded manifold parameterized locally  by some set of field $\Phi^A$ over an $n$-dimensional space-time manifold $M$.  The fields with nonzero $\mathbb{N}$-degree (= ghost number) are called collectively the ghosts, while the original  fields are characterized by ghost number zero. The linear coordinates in the fibres of  $T^\ast \mathcal{M}$ are denote by $\bar{\Phi}_A$; in physical terms, they have the meaning of \textit{sources} to the fields $\Phi^A$. According to the definition of the cotangent bundle of  a graded manifold
$$
\mathrm{gh}(\bar{\Phi}_A)=-\mathrm{gh}({\Phi}^A)\,,\qquad \epsilon(\bar{\Phi}_A)=\epsilon({\Phi}^A)\,.
$$
In such a way the total space of the cotangent bundle $T^\ast \mathcal{M}$ becomes a $\mathbb{Z}$-graded manifold.
The cotangent bundle carries the canonical symplectic structure defined by the $(2,n)$-form
\begin{equation}\label{w-s}
\omega =\delta \bar{\Phi}_A\wedge \delta\Phi^A \wedge d^nx\,,
\end{equation}
with $d^nx$ being a volume form on $M$. Contrary to the BV formalism this symplectic structure is even and has ghost number zero. The corresponding Poisson bracket on the phase space of fields and sources is given by
$$
\{A,B\}=\int _M \left(\frac{\delta_r A}{\delta \Phi^A}\frac{\delta_l B}{\delta \bar{\Phi}_A}-(-1)^{{\widetilde{\Phi}}_A}\frac{\delta_r A}{\delta \bar{\Phi}_A}\frac{\delta_l B}{\delta \Phi^A}\right)d^{n}x\,,
$$
where $A$ and $B$ are functionals of $\Phi$'s and $\bar\Phi$'s.  It is clear that for local functionals $A=\int_M a$ and $B=\int_M b$ this Poisson bracket is given by the integral over $M$ of the Poisson bracket $\{a, b\}$ of two Hamiltonian $n$-forms $a,b\in {\Lambda}_{\omega}^{0,n}(J^\infty E) $.

 Besides the Grassmann parity and the ghost number the manifold $T^\ast \mathcal{M}$ is endowed with one more ${\mathbb{N}}$-grading called the \textit{momentum degree}. This is introduced  by prescribing the following degrees to the fields and sources:
$$
\mathrm{Deg} (\Phi^A)=0\,,\qquad \mathrm{Deg} (\bar{\Phi}_A)=1\,.
$$
In the context of local field theory this grading can also be conveniently described by means of the \textit{Euler vector field}
$$
E_m=\sum_{|I|=0}^\infty \bar{\Phi}_{AI}\frac{\partial}{\partial \bar \Phi_{AI}}\,,\qquad E_m\in \mathfrak{X}_{ev}(J^\infty E)\,.
$$
A form $H\in \Lambda^{p,q}(J^\infty E)$ is said to be homogeneous of momentum degree $k$ iff
$$
L_{E_m} H = k H \,,\qquad k =\mathrm{Deg}\, H  \,.
$$
In particular, the symplectic form (\ref{w-s}) is homogeneous of momentum degree $1$. The momentum degree of a homogeneous vector field $X\in \mathfrak{X}(J^\infty E)$ is defined in similar manner:
$$
L_{E_m}X=[E_m,X]=kX\,,\qquad k=\mathrm{Deg}\,X\,.
$$

A (non-)Lagrangian gauge system is completely specified by a \textit{total BRST charge}.  This is given by a local functional $\Omega$ of fields and sources satisfying the classical master  equation
\begin{equation}\label{meq-s}
\{\Omega, \Omega\}=0
\end{equation}
and the grading conditions
$$
\epsilon(\Omega)=1\,,\qquad \mathrm{gh} (\Omega)=1\,,\qquad \mathrm{Deg} (\Omega) >0\,.
$$
The last inequality implies the following expansion for $\Omega$ according to the momentum degree:
\begin{equation}\label{o-exp}
\Omega=\sum_{k=1}^\infty \Omega_k\,,\qquad \mathrm{Deg }(\Omega_k)=k\,.
\end{equation}
In other words, $\Omega|_\mathcal{M}=0$. In terms of this expansion  the classical master equation (\ref{meq-s}) is equivalent to the infinite sequence of relations
$$
\{\Omega_1,\Omega_1\}=0\,,\qquad \{\Omega_1,\Omega_2\}=0\,,\qquad \{\Omega_2,\Omega_2\}=2\{\Omega_1,\Omega_3\}\,,\qquad \ldots
$$

As is seen the leading term of the expansion (\ref{o-exp}), called the \textit{classical BRST charge}, is Poisson-nilpotent by itself. It carries all information about the original classical system. To be more specific, let us write a few first terms of $\Omega_1$ by making use of DeWitt's condensed index notation\footnote{According to this notation, the discrete indices labeling the components of fields include also space-time coordinates and summation by a pair of repeated indices implies integration over the space-time.}:
\begin{equation}\label{o1}
\Omega_1=T_a(\phi)\bar{\eta}^a +c^\alpha R_\alpha^i(\phi)\bar{\phi}_i+\eta_a L^a_A(\phi)\bar{\xi}^A+\cdots\,.
\end{equation}
Here $T_a(\phi)=0$ are the equations of motion for the original fields $\phi^i$; $\eta$, $\xi$, ... are the ghost fields, and the variables with bar denote the corresponding sources. The $R$'s are determined by the generators of the gauge transformations $\delta_{\varepsilon}\phi^i=R^i_\alpha\varepsilon^\alpha$, while $L$'s generate  the gauge identities $L^a_AT_a(\phi)=0$ for the field equations. Notice that beyond the setting of Lagrangian field theory the generators $R$'s and $L$'s are completely unrelated to each other.

The second term in (\ref{o-exp}) incorporates the so-called \textit{Lagrange structure} \cite{KazLS}.  Its expansion in terms of ghosts and sources starts as
$$
\Omega_2=\bar{\eta}^aV_a^i(\phi)\bar{\phi}_i+\eta_aC^a_{bc}(\phi)\bar{\eta}^b\bar{\eta}^c+\cdots\,.
$$
The structure coefficients $V^i_a$ entering the leading term of $\Omega_2$ are called the \textit{Lagrange anchor}. The choice of a Lagrange anchor determines the way in which the classical gauge system (\ref{o1}) is supposed to be quantized. Different Lagrange anchors lead generally to inequivalent quantum theories with the  same classical limit.

In order to further elucidate the physical meaning of expansion (\ref{o-exp}) and its relation to the standard BV formalism, it is convenient to introduce the commutative algebra $\mathcal{A}$ of smooth functionals of momentum degree zero. In \cite{KazLS}, it was observed that each total BRST charge (\ref{o-exp}) gives $\mathcal{A}$ the structure of a flat $S_\infty$-algebra \cite{Vor2}. The corresponding structure maps $S_n: \mathcal{A}^{\otimes n}\rightarrow \mathcal{A}$ are defined through the derived bracket construction
\begin{equation}\label{hs}
S_n: a_1\otimes a_2\otimes \cdots \otimes a_n \quad \mapsto\quad (a_1,a_2,\ldots,a_n)=\{\cdots \{\Omega_n, a_1\},a_2\},\cdots,a_n\}\in \mathcal{A}\,.
\end{equation}
It follows from the definition that
\begin{itemize}
  \item[(a)] the multi-brackets are odd and symmetric,
$$
(a_1,\ldots, a_k,a_{k+1},\ldots, a_n)=(-1)^{\tilde{a}_k \tilde{a}_{k+1}}(a_1,\ldots,a_{k+1},a_k,\ldots, a_n)\,,$$
  \item[(b)] the map
$$
a\quad \mapsto\quad (a_1,\ldots,a_{n-1},a)
$$
is a derivation of $\mathcal{A}$ of parity $1+\sum_{k=1}^{n-1}\tilde{a}_k$  $(\mathrm{mod}\; 2)$ ,
  \item[(c)] for all $m\geq 2$
  $$
\sum_{k+l=m} \sum_{(k,l)-shuffles} (-1)^\epsilon ((a_{\sigma(1)},\ldots,a_{\sigma(k)}),a_{\sigma(k+1)},\ldots, a_{\sigma(k+l)})=0\,,
$$
where $(-1)^\epsilon$ is a natural sign prescribed by the sign rule for permutations of homogeneous elements $a_1,\ldots,a_n\in \mathcal{A}$, and the $(k,l)$-shuffle is a permutation of indices $1,2,\ldots,k+l$ satisfying $\sigma(1)<\cdots<\sigma(k)$ and $\sigma(k+1)<\cdots,\sigma(k+l)$.
\end{itemize}

One can check \cite{Vor2}, that the set of the generalized Jacobi identities of item (c) is just another form of the component master equations (\ref{o-exp}). For $m=1,2,3$ these identities take the form
\begin{equation}\label{hom}
\begin{array}{c}
Q^2 a = 0\,,\\[3mm]
Q(a, b) + (Qa, b) + (-1)^{\tilde{a}\tilde{b}}
(Qb, a) = 0\,,\\[3mm]
((a, b), c) + (-1)^{\tilde{b}\tilde{c}}˜
((a, c), b) + (-1)^{\tilde{a}˜(˜\tilde{b}+˜\tilde{c})}
((b, c), a)
\\[3mm]
+Q(a, b, c) + (Qa, b, c) + (-1)^{\tilde{a}˜\tilde{b}}(Qb, a, c) + (-1)
^{(\tilde{a}+˜
\tilde{b})˜\tilde{c}}
(Qc, a, b) = 0\,,
\end{array}
\end{equation}
 where we have denoted the unary bracket by $Q=(\,\cdot\,)$. The operator $Q: \mathcal{A}\rightarrow \mathcal{A}$  squares to zero and has the property of odd derivation w.r.t. the binary bracket as well as the commutative multiplication  (item (b) above). The operator $Q$, being given by the Hamiltonian action of the classical BRST charge $\Omega_1$,  is naturally identified with the classical BRST differential of the gauge system.

The binary bracket is defined by the Lagrange structure $\Omega_2$. According to the last equality in (\ref{hom}) the bracket obeys the Jacobi identity up to the homotopy correction determined by the trinary bracket. The Jacobi identities with $m>3$ impose extra relations on this homotopy and all the higher homotopies. Taken together Eqs. (\ref{hom}) mean that the binary bracket induces the usual odd Poisson bracket in the cohomology of the differential $Q$.

In a particular case, where the expansion of the total BRST charge (\ref{o-exp})  stops at the second
term,  $\Omega = \Omega_1 + \Omega_2$, the binary  bracket
$$
(a,b)=(-1)^{\tilde{a}}\{\{\Omega_2,a\},b\}
$$
enjoys all the properties of the ordinary BV bracket (\ref{abr}), including the Jacobi identity. If we further assume  the binary bracket to be non-degenerate, then the classical BRST differential associated with $\Omega_1$ is necessarily given by the Hamiltonian vector field $Q = (S, \,\cdot \, )$, with $S$ being the master action. The classical master equation $(S, S) = 0$ follows  from the identity $Q^2=0$. This explains how the standard BV formalism for Lagrangian gauge systems fits in this more general quantization
approach.

\subsection{Weakly Hamiltonian systems and $P_\infty$-algebras} The phase-space counterpart of the Lagrange structure above is known as the \textit{weak Poisson structure} \cite{LS0}. This is defined on the odd cotangent bundle of an  $\mathbb{N}$-graded manifold $\mathcal{N}$. The latter represents an extension by ghost variables of the space of initial data to the field equations.
Locally, the manifold $\mathcal{N}$ is parameterized by the set of fields $\Phi^a$ leaving on an $(n-1)$-dimensional physical space $N$. The canonical sympletic structure on $\Pi T^\ast[1] \mathcal{N}$ is determined by  the following $(2,n-1)$-form:
\begin{equation}\label{p-sym}
  \omega_1=\delta {\Phi}^\ast_a\wedge \delta\Phi^a \wedge d^{n-1}x\,.
\end{equation}
Here $\Phi^\ast_a$ are linear coordinates in the fibers of $\Pi T^\ast[1]{\mathcal{N}}$.
Following the physical tradition, we refer to $\Phi^\ast_a$ as antifields. By the definition of a shifted, odd cotangent bundle
$$
  \mathrm{gh}(\Phi^\ast_a)=-\mathrm{gh}(\Phi^a)+1\,,\qquad \epsilon(\Phi^\ast_a) =\epsilon(\Phi^a)+1 \quad (\mbox{mod}\, 2)\,.
$$
The corresponding odd Poisson bracket in the space of fields and antifields reads
$$
(A,B)=\int_M \left(\frac{\delta_r A}{\delta \Phi^a}\frac{\delta_l B}{\delta \Phi^\ast_a}-\frac{\delta_r A}{\delta \Phi^\ast_a}\frac{\delta_l B}{\delta \Phi^a}\right)d^{n-1}x\,.
$$

The manifold $\Pi T^\ast[1] \mathcal{N}$ can be endowed with a natural $\mathbb{N}$-grading by prescribing the following degrees to the local coordinates:
$$
\mathrm{Deg} (\Phi^a)=0\,,\qquad  \mathrm{Deg}(\Phi^\ast_a)=1\,.
$$
This additional grading is called the \textit{polyvector degree}. Again, in the local setting, one can define this grading  through the Euler vector field on the jet space
$$
E_p=\sum_{|I|=0}^\infty \Phi_{aI}^\ast\frac{\partial }{\partial \Phi^\ast_{aI}}\,, \qquad E_p\in \mathfrak{X}_{ev}(J^\infty E)\,,
$$
so that the homogeneous forms or vector fields on $J^\infty E$ belong to a definite eigenvalue of the operator  $L_{E_p}$.  In particular, $L_{E_p}\omega_1=\omega_1$, which means that the odd symplectic form (\ref{p-sym}) is homogeneous of polyvector degree $1$.

A weak Poisson structure on $\mathcal{N}$ is determined by a local functional $S$ satisfying the master   equation
$$
(S,S)=0
$$
and the gradding conditions
$$
\epsilon(S)=0\,,\qquad \mathrm{gh}(S)=2\,,\qquad \mathrm{Deg}(S)>0\,.
$$
Notice that contrary to the BV formalism the functional $S$, being even, has ghost number $2$. The physical meaning of the functional $S$ becomes clear if one expands it in powers of ghosts and antifields. By making use of DeWitt's condensed notation we can write\
\begin{equation}\label{s-exp}
S=\sum_{k=1}^\infty S_k\,,\qquad \mathrm{Deg} \, S_k=k\,,
\end{equation}
where
$$
S_1=T_a(\phi)\eta^{\ast a}+ c^\alpha R_\alpha^i(\phi)\phi_i^\ast+\cdots\,, \qquad S_2=P^{ij}(\phi)\phi^\ast_i\phi^\ast_j+\cdots\,.
$$
The structure coefficients of this expansion have the following interpretation. The equations $T_a(\phi)=0$ are the Hamiltonian constraints on the original phase-space variables $\phi^i$. The $R$'s are given by the generators of the gauge symmetry transformations $\delta \phi^i=R^i_\alpha \varepsilon^\alpha$. The leading term of $S_2$ defines a weak Poisson structure on the phase space of fields $\phi^i$ determined by the bivector $P=P^{ij}(\phi)\partial_i\wedge \partial_j$.

Expansion (\ref{s-exp}) for the generating functional  $S$ admits also  a straightforward interpretation in terms of $P_\infty$-algebras \cite{Vor2}: we let $\mathcal{A}$ denote the commutative algebra of functionals of polyvector degree $0$ and define the $n$-th structure map $P_n: \mathcal{A}^{\otimes n}\rightarrow \mathcal{A}$ through the derived bracket
\begin{equation}\label{hp}
P_n: a_1\otimes a_2\otimes \cdots \otimes a_n \quad \mapsto\quad \{a_1,a_2,\ldots,a_n\}=(\cdots (S_n, a_1),a_2),\cdots,a_n)\in \mathcal{A}\,.
\end{equation}
It follows from the definition that
\begin{itemize}
  \item[(a)] the multi-brackets are even and satisfy the symmetry property
$$
\{a_1,\ldots, a_k,a_{k+1},\ldots, a_n\}=(-1)^{(\tilde{a}_k+1)( \tilde{a}_{k+1}+1)}\{a_1,\ldots,a_{k+1},a_k,\ldots, a_n\}\,,$$
  \item[(b)] the map
$$
a\quad \mapsto\quad \{a_1,\ldots,a_{n-1},a\}
$$
is a derivation of $\mathcal{A}$ of parity $\sum_{k=1}^{n-1}\tilde{a}_k$  $(\mathrm{mod}\; 2)$ ,
  \item[(c)] for all $m\geq 2$
  $$
\sum_{k+l=m} \sum_{(k,l)-shuffles} (-1)^\epsilon \{\{a_{\sigma(1)},\ldots,a_{\sigma(k)}\},a_{\sigma(k+1)},\ldots, a_{\sigma(k+l)}\}=0\,,
$$
where $(-1)^\epsilon$ is some natural  sign factor, see \cite{Vor2}, \cite{B}.
\end{itemize}

As with $S_\infty$-algebras, the unary bracket defines the classical BRST differential $Q$. The binary bracket is differentiated by $Q$ and satisfies the Jacobi identity up to homotopy corrections controlled by the trinary bracket and $Q$. Moreover, the binary bracket descends  to the cohomology of $Q$ inducing a true Poisson bracket in the space of physical observables.

It should be noted that contrary to the classical BRST charge (\ref{o-exp}), the functional $S$ is only a part of data needed to formulate the gauge dynamics. The other part is given by the local functional $\Gamma$ obeying the following conditions \cite{LS0}:
$$
(S,\Gamma)=0\,,\qquad \epsilon(\Gamma)=1\,,\qquad \mathrm{gh}(\Gamma)=1\,,\qquad \mathrm{Deg}\,(\Gamma)>0\,.
$$
It is the functional $\Gamma$ that generates the time evolution of the system through the odd Poisson bracket:
$$
\dot \Phi^a= (\Gamma, \Phi^a)\,,\qquad \dot \Phi_a^\ast=(\Gamma, \Phi_a^\ast)\,.
$$
Taken together the functionals $S$ and $\Gamma$ define a \textit{weak Hamiltonian structure}, which can be
regarded as a strong homotopy generalization of the conventional BFV formalism.  The latter corresponds to a special case where $S=S_1+S_2$ and the even Poisson bracket associated to $S_2$ is non-degenerate.

\subsection{$P_\infty$ from $S_\infty$} In the above discussion of $S_\infty$- and $P_\infty$-algebras we assumed $\mathcal{A}$ to be the commutative algebra of smooth functionals on $\mathcal{M}$ and $\mathcal{N}$, respectively.
As a linear  space  the algebra $\mathcal{A}$ contains the subspace  of local functionals $\mathcal{A}_{loc}$. The latter space represents the main interest for the local field theory. Although $\mathcal{A}_{loc}$ is not a commutative subalgebra, it is still closed  with respect to the multi-brackets (\ref{hs}) and (\ref{hp}). In mathematics, a linear space endowed with a sequence of multi-brackets satisfying the generalized Jacobi identities is known as an $L_\infty$-algebra \cite{LSt}. Notice that the $L_\infty$-algebras underlying the Lagrange and the weak Poisson structures differ by the parity and symmetry properties of the corresponding multi-brackets. This difference, however, is not fundamental as one can switch between the two definitions by applying the parity reversion functor to $\mathcal{A}$, see \cite{Vor2}, \cite{B}. In what follows we consider only local functionals and their $L_\infty$-algebras.

Suppose we are given a total BRST charge
$$\Omega=\int_M O\,,\qquad O\in {\Lambda}^{0,n}(J^\infty E)\,,$$
associated with some (non-)variational gauge system, and let $\mathcal{F}(M)$  be an $(n-1)$-dimensional foliation of $M$ by the Cauchy hypersurfaces. The master equation
(\ref{meq-s}) for the BRST charge $\Omega$ is equivalent to
$$
-\frac12\{O,O\}=d\Sigma
$$
for some form $\Sigma\in {\Lambda}{}^{0,n-1}(J^\infty E)$ of ghost number $2$; here the braces denote the Poisson bracket in $\Lambda^{0,n}_{\omega} (J^\infty E)$ coming from the canonical symplectic form (\ref{w-s}).

Let $\mathcal{Q}$ denote the Hamiltonian vector field corresponding to  $O$. We have
$$
-i_{\mathcal{Q}^2}\omega\simeq\frac12 \delta \{O,O\}\simeq 0\,.
$$
Since $i_{\mathcal{Q}^2}\omega$ is a source $1$-form, the last relation implies the ``strong'' equality  $i_{\mathcal{Q}^2}\omega=0$ (Proposition \ref{SF}) and from the non-degeneracy of $\omega$ it then follows that $\mathcal{Q}^2=0$.  Thus, $\mathcal{Q}$ is a homological vector field and we can consider the descendent gauge system $(\mathcal{Q}, \omega_1)$.
 By Proposition \ref{p5}, the presymplectic potential for the descendent presymplectic form $\omega_1=\delta\theta_1$ is determined by the variation of the total BRST charge, namely,
$$
\delta O =\delta\Phi^A\wedge \frac{\delta O}{\delta \Phi^A} +\delta\bar{\Phi}_A \wedge\frac{\delta O}{\delta \bar{\Phi}_A}- d\theta_1\,.
$$
According to Proposition (\ref{p2}), the homological vector field $\mathcal{Q}$ is Hamiltonian with respect to
$\omega_1$ with the Hamiltonian given by the $(n-1)$-form $\Sigma$. Denoting by $(\,\cdot\,,\,\cdot\,)$ the Poisson bracket in the space of $\mathcal{F}$-Hamiltonian $(n-1)$-forms, we can write
$$
(\Sigma,\Sigma)\simeq 0\,,
$$
see Corollary \ref{c1}.
The preceding discussion of the relationship between the BV and BFV formalisms makes it quite  reasonable to identify the integral
$$
S=\int_C \Sigma\,,\qquad C\in \mathcal{F}(M) \,,
$$
with the generating functional of a weak Poisson structure; in so doing, we need to specify the polyvector degree of $\mathcal{F}$-Hamiltonian forms. A momentary reflection shows that the naive identification of the polyvector degree with the momentum degree does not work in general. The reason is obvious: the presymplectic form $\omega_1$ is not homogeneous with respect to the momentum degree, rather it is given by the sum
\begin{equation}\label{om1}
\omega_1=\sum_{k=1}^\infty\omega_1^{_{(k)}}\,,\qquad \mathrm{Deg}(\omega_1^{_{(k)}})=k\,,
\end{equation}
where $\omega_1^{_{(1)}}$ is determined by the classical BRST charge $\Omega_1$, $\omega_1^{_{(2)}}$ comes from the Lagrange structure $\Omega_2$, and so on. As a result, the $\mathcal{F}$-Hamiltonian forms of momentum degree zero do not form a commutative algebra with respect to the odd Poisson bracket $(\,\cdot\,,\,\cdot\,)$ whenever $\omega_1^{(k)}\neq 0$ for $k>1$. This makes impossible the derived bracket construction and the interpretation of the homogeneous components of $\Sigma$ as multi-brackets on the space of $\mathcal{F}$-Hamiltonian forms of momentum degree zero.

In order to overcome this difficulty and equip the space of fields and sources with an appropriate polyvector degree we will assume that the presymplectic form (\ref{om1}) is homotopically  equivalent to its leading term $\omega_1^{_{(1)}}$. More precisely, there must exist a formal diffeomorphism of the jet space $h=e^X: J^\infty E\rightarrow J^\infty E$ generated by a vertical vector field $X$ with $\mathrm{Deg}\, X>0$ such that
$$
h^\ast(\omega_1^{_{(1)}})\simeq \omega_1\,.
$$
Then we define the Euler field counting the polyvector degree as
\begin{equation}\label{pdeg}
E_p= h^\ast (E_m)=e^{X} E_m e^{-X}\,,
\end{equation}
where
$$
E_m=\sum_{|I|=0}^\infty \bar\Phi_{AI}\frac{\partial}{\partial \Phi_{AI}}
$$
is the evolutionary vector field counting the momentum degree.
By definition we have
$$
L_{E_p}\omega_1=\omega_1\,.
$$
 This means that  $\omega_1$ is an odd presymplectic form of polyvector degree 1. Then the polyvector degree of the bracket  $(\,\cdot\,,\,\cdot\,)$, being opposite to that of $\omega_1$,   is equal to $-1$.  It remains to note that the $\mathcal{F}$-Hamiltonian $(n-1)$-forms of polyvector degree zero form a commutative algebra with respect to the odd bracket above. For if the bracket  $(a,b)$ of two such forms were  nonzero, it would be of polyvector degree $-1$.

\begin{rem}
No example of gauge theory where the derived presymplectic form $\omega_1$ would not be homotopical to its homogeneous part $\omega^{_{(1)}}_1$ has come to our notice. It seems that the homotopical equivalence $\omega_1 \sim\omega^{_{(1)}}_1$ is a general property ensured by the properness of the classical BRST differential, although we  have not a complete proof of this fact at the moment.

 Given a presymplectic form of polyvector degree $1$, we can equip the space of $\mathcal{F}$-Hamiltonian forms of polyvector degree zero with the structure of a $P_\infty$-algebra. The corresponding multibrackets are given by
\begin{equation}\label{binbr}
\{a_1,a_2,\ldots,a_n\}=(\cdots(\Sigma_n,a_1), a_2,\cdots, a_n)\,,\qquad n=1,2, \ldots\,,
\end{equation}
where
$$
\Sigma =\sum_{n=1}^\infty \Sigma_n\,,\qquad L_{E_p}\Sigma_n=n\Sigma_n\,,\qquad L_{E_p}a_i=0\,.
$$
As usual the unary bracket defines the action of the classical BRST differential $Q$.  It follows from the definition of polyvector degree (\ref{pdeg}) that
$$
\{a\}=(a, \Sigma_1)=L_Qa\,,\qquad (b)=\{b, O_1\}=L_Q b
$$
for all $a\in  {\Lambda}^{0,n-1}_{\omega_1, \mathcal{F}}(J^\infty E)$ and $b\in {\Lambda}^{0,n}_{\omega}(J^\infty E)$. In either formalism,
the classical BRST differential is given by one and the same homological vector field $Q$ on $J^\infty E$. Then the binary bracket in (\ref{binbr}),
$$
\{a,b\}=(-1)^{\tilde{a}}((\Sigma_2,a),b)\,,
$$
is differentiated by $Q$ and induces a Lie bracket in the cohomology of $Q$. In such a way the space of physical observables -- BRST invariant, local functionals in ghost number zero -- gets  the structure of a Lie algebra.

The results of this subsection can be summarized as follows. Starting from the total BRST charge associated to a (non-)variational gauge system endowed with a Lagrange structure, we were able to define the Lie bracket in the space of physical observables  under the assumption that the descendent presymplectic structure is homotopically equivalent to its momentum-degree-one  part.
The last property is not too restrictive and holds for all known examples of Lagrange structures. This Lie bracket can then be extended, in natural way, to the Poisson bracket in the algebra of non-local functionals that are BRST invariant and sufficiently smooth.

\end{rem}

\subsection{Example: chiral bosons in two dimensions}
Let $\mathbb{R}^{1,1}$  be a two-dimensional Min\-kow\-ski space with the light-cone coordinates $x^{\pm}=\tau\pm \sigma$ and let $\Lambda(\mathbb{R}^{1,1})\otimes \mathcal{G}$ denote the space of differential forms on $\mathbb{R}^{1,1}$ with values in a semisimple Lie algebra $\mathcal{G}$. In the light-cone frame  the space of $1$-forms splits into the direct sum $\Lambda_+^1(\mathbb{R}^{1,1})\oplus \Lambda_-^1(\mathbb{R}^{1,1})$ of two subspaces spanned, respectively, by the self-dual and anti-self-dual forms.  By the triangle brackets $\langle\,\cdot\,, \,\cdot\,\rangle$ we will denote the invariant bilinear form on $\mathcal{G}$. This bilinear form as well as the commutator $[\,\cdot\,\,,\cdot\,]$ in $\mathcal{G}$ are naturally extended  to the tensor product  $\Lambda(\mathbb{R}^{1,1})\otimes \mathcal{G}$.

Now consider the relativistic-invariant field equations
\begin{equation}\label{df}
d\phi=0\,,
\end{equation}
where $\phi\in \Lambda^1_+(\mathbb{R}^{1,1})\otimes \mathcal{G}$. If $\{t_i\}$ is a basis in $\mathcal{G}$, then $\phi=t_i\phi^i_+dx^+$ and the equations of motion take the form
$
\partial_-\phi_+^i=0
$.
These equations are known to be non-Lagrangian but admit a one-parameter family of nontrivial Lagrange structures \cite{KLS1}, \cite{Sh}. The corresponding BRST charge reads
$$\Omega=\int_{\mathbb{R}^{1,1}} O\,,\qquad O=O_1+O_2\,,
$$
\begin{equation}\label{OOO}
O_1=\langle\bar\eta, d\phi\rangle\,,\qquad O_2=\langle\bar\eta,[\phi,\bar\phi]+kd\bar\phi\rangle+\frac12\langle\eta,[\bar\eta,\bar\eta]\rangle\,,\qquad k\in \mathbb{R}\,.
\end{equation}
Here
$$
\begin{array}{ccc}
\bar \phi=\bar\phi_-dx^{-}\in \Lambda^1_-(\mathbb{R}^{1,1})\otimes \mathcal{G}\,,&\quad \bar\eta\in \Lambda^0(\mathbb{R}^{1,1})\otimes \mathcal{G}\,,&\quad \eta=\eta_{+-}dx^+\wedge dx^-\in \Lambda^2(\mathbb{R}^{1,1})\otimes \mathcal{G}\,,\\[2mm]
\mathrm{gh}(\bar\phi)=0\,,&\quad \mathrm{gh}(\bar\eta)=1\,,&\quad \mathrm{gh}(\eta)=-1\,.
\end{array}
$$
The form $O_1$ defines the classical BRST charge. In the absence of gauge symmetries and identities, it is constructed in terms of the equations of motion (\ref{df}) alone. The Lagrange structure defines and is defined by the form $O_2$.

The canonical symplectic structure in the space of fields and sources is determined by the $(2,2)$-form
$$
\omega=\langle\delta \bar\phi,\delta\phi\rangle+\langle\delta\bar\eta,\delta\eta\rangle\,.
$$
Evaluating the master equation for the total BRST charge, we find
$$
\frac12\{O,O\}=d\Sigma \,,\qquad \Sigma=\Sigma_2+\Sigma_3\,,
$$
$$
\Sigma_2=\langle \phi, [\bar\eta,\bar\eta]\rangle+k\langle \bar\eta, d\bar\eta\rangle\,,\qquad \Sigma_3=k\langle \bar\phi,[\bar\eta,\bar\eta]\rangle\,.
$$

By virtue of Proposition \ref{p5}, the presymplectic $(2,1)$-form $\omega_1$ of the descendent gauge system is determined by the variation of the total BRST charge, namely,
$$
\delta O = \delta \Phi^A\wedge \frac{\delta O}{\delta \Phi^A} + \delta\bar{\Phi}_A\wedge \frac{\delta O}{\delta \bar{\Phi}_A}- d\theta_1\,, \qquad \Phi^A=(\phi^i_+, \eta^i)\,,\quad \bar{\Phi}_A=(\bar\phi^i_-, \bar\eta^i_{+-})\,,
$$
$$
\omega_1=\delta \theta_1 = \omega_1^{(1)}+\omega_1^{(2)}=\langle\delta \bar\eta,\delta\phi\rangle+k\langle\delta \bar\eta,\delta\bar\phi\rangle\,.
$$
As is seen the descendent presymplectic form $\omega_1$ is inhomogeneous with respect to the momentum degree.

Now to define a (weak) Poisson bracket in the phase space of fields $\phi$ we have to fix a causal structure on $\mathbb{R}^{1,1}$. This is given by the one-dimensional Cauchy foliation $\mathcal{F}(\mathbb{R}^{1,1})$ associated with the global time function $\tau$. Upon this choice we have
$$
\omega_1\simeq \langle\delta \bar\eta,\delta\phi_+ -k\delta\bar\phi_-\rangle\wedge d\sigma\,.
$$
Introducing the new field $\varphi=\phi_+ -k\bar\phi_-$, we can rewrite the last form as
$$
\omega_1\simeq \langle\delta \bar\eta,\delta\varphi \rangle\wedge d\sigma\,.
$$
 One can regard this change of variables as resulting from the formal diffeomorphism $h=e^X$ of the jet space generated by the evolutionary vector field $X$ with
$$
X_0=k\bar\phi_-^i\frac{\partial}{\partial \phi^i_+}\,.
$$
It is clear that $h^\ast(\omega_1)=\omega_1^{_{(1)}}$. Furthermore, applying this diffeomorphism to $\Sigma$ yields
$$
h^\ast(\Sigma)\simeq\Sigma_2\simeq\langle \varphi, [\bar\eta,\bar\eta]\rangle d\sigma\,.
$$
Thus, in terms of the new variables $\varphi$, $\bar\phi$, $\eta$, $\bar\eta$ both the presymplectic form $\omega_1$ and the generating functional of the weak Poisson structure
$$
S=\int_{\tau=c}\Sigma_2
$$
become homogeneous if we assign $\varphi$ with zero polyvector degree. The last fact implies that the derived bracket for the $\mathcal{F}$-Hamiltonian $1$-forms
$$
\{A,B\}=(A,(B,\Sigma_2))
$$
satisfies the standard  Jacobi identity for the Lie bracket. In particular, one can easily see   that the $1$-forms
$$
\varphi(\varepsilon)=\langle\varepsilon,\varphi\rangle d\sigma\,,\qquad \varepsilon\in \Lambda^0(\mathbb{R}^{1,1})\otimes \mathcal{G}\,,
$$
satisfy the commutation relations for the affine Lie algebra $\hat{\mathcal{G}}$ of level $k$:
\begin{equation}\label{JJ}
\{\varphi(\varepsilon_1),\varphi(\varepsilon_2)\}\simeq\varphi([\varepsilon_1,\varepsilon_2]) + k\langle \varepsilon, d\varepsilon \rangle\,.
\end{equation}
One can regard the functional
$$
\varphi[\varepsilon]=\int_{\tau=c}\varphi(\varepsilon)
$$
as the value of the field $\varphi$ at $\tau=c$, smeared with test function $\varepsilon$. Striping  Eq. (\ref{JJ})  of the test functions yields the Poisson brackets of fields at definite space-time points. These brackets can then be extended to more general functionals of fields by the Leibniz rule.

In the framework of Peierls' bracket, the commutation relations (\ref{JJ}) for the Lagrange structure (\ref{OOO}) were first derived in \cite{Sh}.

\vspace{0.2 cm}

\noindent \textbf{Acknowledgements.} I am grateful to Dmitry
Kaparulin and Simon Lyakhovich for critical comments on the first
version of the manuscript.  The work was partially supported by the
RFBR grant 13-02-00551.

\appendix

\section{Jet bundles and the variational bicomplex}\label{A}

In this appendix, we briefly recall some basic elements from the theory of jet bundles and variational bicomplex, which are relevant for our discussion. A more systematic exposition of these concepts can be found in   \cite{Anderson}, \cite{Dickey}, \cite{Olver}, \cite{Saunders}.

The starting point of any field theory is a locally trivial fiber
bundle $\pi : E\rightarrow M$ which base is identified with the
space-time manifold and which sections are called \textit{classical
fields}.  For the sake of simplicity, we restrict ourselves to
fields with values in vector bundles, although the subsequent
discussion could be straightforwardly extended to general smooth
bundles. On the other hand, to accommodate bosonic and fermionic
fields, we alow the fibers of $E$  to be superspaces with a given
number of even and odd dimensions; in so doing, the base $M$ remains
a pure even manifold. The Grassmann parity of a homogeneous object
$A$ will be denoted by $\tilde{A}\in \mathbb{Z}_2=\{0,1\}$.

Associated with a vector bundle $\pi: E\rightarrow M$ is the vector bundle $\pi_k: J^kE \rightarrow M$ of $k$-jets of sections of $E$. By definition, the $k$-jet $j^k_x\phi$
at $x\in M$ is just the equivalence class of the section $\phi\in \Gamma(E)$,  where  two sections are considered to be equivalent if they have the same Taylor development of order $k$ at $x\in M$ in some (and hence any) adapted coordinate chart. It follows from the definition that each section $\phi$ of $E$ induces the section $j^k\phi$ of $J^kE$ by the rule $(j^k\phi)(x)=j^k_x\phi$. The latter is called the $k$-jet prolongation of $\phi$.

If $E|_U\simeq\mathbb{R}^m\times U$ is an adapted coordinate chart with local coordinates $(x^i, \phi^a)$, then $(x^i, \phi^a, \phi^a_i,\ldots, \phi^a_{i_1\cdots i_k})$ are local coordinates in $J^kE$ and the induced section $j^k\phi$ is given in these coordinates by
$$
x\quad \mapsto\quad (x, \phi^a(x), \partial_i\phi^a(x),\ldots, \partial_{i_1}\cdots\partial_{i_k}\phi^a(x))\,.
$$
We use the multi-index notation and the summation convention through the paper. A multi-index $I=i_1i_2\cdots i_n$ represents the corresponding set of symmetric covariant indices. The order of the multi-index is given by $|I|=k$. By definition we set $Ij=jI=i_1i_2\cdots i_kj$. With the multi-index notation we can write the partial derivatives of fields as $\partial_{i_1}\cdots \partial_{i_k}\phi^a=\partial_I \phi^a$  and the set of local coordinates on $J^kE|_U$ as $(x^i,\phi^a_I)$, $|i|\leq k$.

Jet bundles come with natural projection $J^kE\rightarrow J^{k-1}E$ defined by forgetting all the  derivatives of order $k$. One can easily see that this projection gives $J^kE$ the structure of an affine bundle over the base $J^{k-1}E$.  Thus, we have the infinite sequence of surjective submersions
\begin{equation}\label{i-lim}
\xymatrix{\cdots\ar[r]& J^3E\ar[r]&J^2E\ar[r]&J^1E\ar[r]&J^0E\simeq E}\,.
\end{equation}
The infinite order jet bundle $J^\infty E$ is now defined as the inverse  limit over the jets of order $k$:
$$
J^\infty E=\lim_{\longleftarrow}J^k E\,.
$$
Let $\Lambda^\ast(J^kE)$ denote the space of differential forms on $J^kE$.  The sequence of projections (\ref{i-lim}) gives rise to the chain of pullback maps
$$
\xymatrix{\cdots&\ar[l]\Lambda^\ast(J^3E)&\ar[l]\Lambda^\ast (J^2E)&\ar[l]\Lambda^\ast(J^1E)&\ar[l]\Lambda^\ast(J^0E)}\,.
$$
This allows one to define the space of differential forms on $\Lambda(J^\infty E)$ as the direct limit
$$
\Lambda^\ast(J^\infty E)=\lim_{\longrightarrow} \Lambda^\ast(J^kE)\,.
$$
 According to this definition each  differential form on $J^\infty E$ is the pullback of a smooth form on some finite jet bundle $ J^kE$. As usual, the smooth functions on $J^\infty E$ are identified with the $0$-forms. For notational simplicity, we will not distinguish between a form on $J^\infty E$ and its representatives in finite dimensional jet bundles. The exterior differential on $\Lambda^\ast (J^\infty E)$ will be denoted by $D$.

The de Rham complex $(\Lambda^\ast (J^\infty E), D)$ of differential forms  on $J^\infty E$ possesses the differential ideal $\mathcal{C}(J^\infty E)$ of contact forms. By definition, $\alpha\in \mathcal{C}(J^\infty E)$ iff $(j^\infty \phi)^\ast \alpha=0$ for all sections $\phi\in \Gamma(E)$.  The ideal $\mathcal{C}(J^\infty E)$ is known to be generated by the contact $1$-forms, which in local coordinates take the form $\delta \phi^a_I:=D\phi^a_I-\phi^a_{Ij}Dx^j$. Using the contact forms, one can split the exterior differential $D$  into the sum of \textit{horizontal} and \textit{vertical differentials}, namely,  $D=d+\delta$ where
$$
d=dx^j\wedge \left( \frac{\partial}{\partial x^j}+\phi^a_{Ij}\frac{\partial}{\partial \phi^a_{I}}\right)\,,\qquad \delta=\delta \phi_I^a\wedge \frac{\partial_l}{\partial \phi^a_I}\,.
$$
It is easy to see that
$$
d^2=0\,,\qquad \delta^2=0\,,\qquad d\delta+\delta d=0\,.
$$
Any $p$-form of $\Lambda^p(J^\infty E)$ can now be written as a finite sum of homogeneous forms
$$
f dx^{i_1}\wedge \cdots\wedge dx^{i_r}\wedge \delta \phi^{a_1}_{I_1}\wedge\cdots\wedge \delta\phi^{a_s}_{I_s}
$$
of horizontal degree $r$ and vertical degree $s$, with $r+s=p$ and $f$ being a smooth function on $J^\infty E$. The \textit{variational bicomplex} is the double complex $(\Lambda^{\ast,\ast}(J^\infty E), d,\delta)$ of differential forms on $J^\infty E$:
\begin{displaymath}
\xymatrix{
&&\vdots &\vdots&&\vdots\\
&0 \ar[r] & \Lambda^{2,0}(J^\infty E) \ar[u]^\delta \ar[r]^-{d} & \Lambda^{2,1}(J^\infty E) \ar[r]^-{d} \ar[u]^\delta & \ldots \ar[r]^-{d} & \Lambda^{2,n}(J^\infty E) \ar[u]^\delta \\
&0 \ar[r] & \Lambda^{1,0}(J^\infty E) \ar[u]^\delta \ar[r]^-{d} & \Lambda^{1,1}(J^\infty E) \ar[r]^-{d} \ar[u]^\delta & \ldots \ar[r]^-{d} & \Lambda^{1,n}(J^\infty E) \ar[u]^\delta  \\
 0\ar[r]&\mathbb{R}\ar[r] & \Lambda^{0,0}(J^\infty E) \ar[u]^\delta \ar[r]^-{d} & \Lambda^{0,1}(J^\infty E) \ar[r]^-{d} \ar[u]^\delta & \ldots \ar[r]^-{d} & \Lambda^{0,n}(J^\infty E) \ar[u]^\delta \\
}\end{displaymath} Here the base manifold $M$ is assumed to be
connected. The important property of the variational bicomplex is
that all the rows and columns of the diagram above are exact.

It is possible to augment the variational bicomplex from below by the de Rham complex of the base manifold:
\begin{displaymath}
\xymatrix{
 0\ar[r]&\mathbb{R}\ar[r] & \Lambda^{0,0}(J^\infty E)  \ar[r]^-{d} & \Lambda^{0,1}(J^\infty E) \ar[r]^-{d}  & \ldots \ar[r]^-{d} & \Lambda^{0,n}(J^\infty E)  \\
 0\ar[r]&\mathbb{R}\ar[r] & \Lambda^{0}(M) \ar[u]^{\pi_\infty^\ast} \ar[r]^-{d} & \Lambda^{1}(M) \ar[r]^-{d} \ar[u]^{\pi_\infty^\ast} & \ldots \ar[r]^-{d} & \Lambda^{n}(M) \ar[u]^{\pi^\ast_\infty}\\
 & & 0 \ar[u]  & 0  \ar[u] &  & 0 \ar[u]
}\end{displaymath}
The augmented bicomplex is also exact in columns.

As with any bicomplex, one can consider the relative cohomology
of  ``$\delta$ modulo $d$''. It is described by the groups $H^{p,q}(J^\infty E; \delta/d)$ which  are  nothing but the standard cohomology groups of the quotient complex $\widetilde{\Lambda}^{p,q}(J^\infty E)=\Lambda^{p,q}(J^\infty E) /d\Lambda^{p,q-1}(J^\infty E)$ with differential induced by $\delta$. In the main text, we use repeatedly the following statement about the relative $\delta$-cohomology.

\begin{proposition}\cite[Sec.19.3.9]{Dickey}\label{Prop-A}
$$
H^{p,q} (J^\infty E; \delta/d)=0\quad \mbox{for}\quad p>0\quad \mbox{and}\quad H^{0,q}(J^\infty E; \delta/d)\simeq \Lambda^q(M)/d\Lambda^{q-1}(M)\,.
$$
\end{proposition}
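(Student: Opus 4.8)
The plan is to prove the statement by the standard staircase (zig-zag) argument for a double complex, feeding on the two exactness properties recorded just above: the $d$-exactness of the rows (with the left augmentation by $\mathbb{R}$) and the $\delta$-exactness of the columns (with the bottom augmentation by $\pi_\infty^\ast\Lambda^\ast(M)$). Recall that $H^{p,q}(J^\infty E;\delta/d)$ is the cohomology at bidegree $(p,q)$ of the complex $(\widetilde\Lambda^{\bullet,q},\bar\delta)$, where $\bar\delta$ is the differential induced by $\delta$; this is legitimate since $\delta d=-d\delta$ forces $\delta(d\Lambda^{p,q-1})\subset d\Lambda^{p+1,q-1}$. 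A class is represented by $\alpha\in\Lambda^{p,q}$ subject to the cocycle condition $\delta\alpha=d\beta_1$ for some $\beta_1\in\Lambda^{p+1,q-1}$, and two representatives are identified when they differ by an element of $\delta\Lambda^{p-1,q}+d\Lambda^{p,q-1}$.

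First I would dispose of the case $p>0$. Starting from $\delta\alpha=d\beta_1$ and applying $\delta$, the relations $\delta^2=0$ and $\delta d=-d\delta$ show that $\delta\beta_1$ is $d$-closed; since it sits in horizontal degree $q-1\le n-1$, row exactness yields $\delta\beta_1=d\beta_2$ with $\beta_2\in\Lambda^{p+2,q-2}$. Iterating produces a descending staircase $\beta_k\in\Lambda^{p+k,q-k}$ with $\delta\beta_{k-1}=d\beta_k$, which terminates at horizontal degree $0$: the form $\delta\beta_q\in\Lambda^{p+q+1,0}$ is $d$-closed and, by the injectivity of $d$ at horizontal degree $0$ in positive vertical degree, must vanish. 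Now $\beta_q$ is $\delta$-closed in vertical degree $p+q>0$, so column exactness gives $\beta_q=\delta\gamma_q$. Climbing back up the staircase, at each rung one solves $\delta(\beta_{k-1}+d\gamma_k)=0$ by column exactness, which is available because the vertical degree $p+k-1$ stays positive (its minimum, at $k=1$, equals $p>0$); this yields $\gamma_{k-1}$ with $\beta_{k-1}=\delta\gamma_{k-1}-d\gamma_k$. The final rung produces $\alpha=\delta\gamma_0-d\gamma_1$, exhibiting $\alpha$ as a $\bar\delta$-coboundary. Hence $H^{p,q}(J^\infty E;\delta/d)=0$ for $p>0$.

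For $p=0$ the descent and the ascent apply verbatim down to $k=2$, the point at which the vertical degree $p+k-1$ would reach zero. The last rung then reads $\delta(\alpha+d\gamma_1)=0$ in $\Lambda^{0,q}$, and here the augmented column exactness identifies $\ker(\delta\colon\Lambda^{0,q}\to\Lambda^{1,q})$ with $\pi_\infty^\ast\Lambda^q(M)$, so $\alpha+d\gamma_1=\pi_\infty^\ast\mu$ for some $\mu\in\Lambda^q(M)$. This shows that the natural map $\Lambda^q(M)/d\Lambda^{q-1}(M)\to H^{0,q}(J^\infty E;\delta/d)$, $[\mu]\mapsto[\pi_\infty^\ast\mu]$, is surjective; it is well defined because $\pi_\infty^\ast$ intertwines the horizontal differentials, is injective, and lands in $\ker\delta$. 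For injectivity I would argue inductively on $q$: if $\pi_\infty^\ast\mu=d\rho$ then $\delta\rho$ is $d$-closed, hence $d$-exact by row exactness, so $\rho$ is itself a $(0,q-1)$-cocycle; by the surjectivity just established in degree $q-1$ one has $\rho\equiv\pi_\infty^\ast\nu\pmod{d\Lambda^{0,q-2}}$, whence $\pi_\infty^\ast\mu=d\rho=\pi_\infty^\ast d\nu$ and therefore $\mu\in d\Lambda^{q-1}(M)$ by injectivity of $\pi_\infty^\ast$ (the case $q=1$ being immediate, since then $\delta\rho=0$ forces $\rho=\pi_\infty^\ast\nu$).

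I expect the main obstacle to be bookkeeping rather than conceptual: one must keep the whole zig-zag strictly inside the range $0\le q\le n$ where the stated row exactness is available, in particular never invoking exactness at the top horizontal degree $n$. The descent indeed avoids this, since every form $\delta\beta_k$ with $k\ge1$ has horizontal degree $q-k\le n-1$; the two augmentations enter only at the extreme rungs, the $\mathbb{R}$-augmentation (injectivity of $d$) at horizontal degree $0$ and the base complex $\pi_\infty^\ast\Lambda^\ast(M)$ at vertical degree $0$. The same conclusion can be phrased as the identification of the relevant page of the spectral sequence of the variational bicomplex, but I find the explicit staircase the cleanest way to control these edge contributions.
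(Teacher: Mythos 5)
Your argument is correct, but there is nothing in the paper to compare it against: Proposition \ref{Prop-A} is stated with a citation to Dickey [Sec.~19.3.9], and no proof is given in the text. Judged on its own merits, your staircase proof is a valid (and essentially the standard) derivation of the relative $\delta/d$-cohomology from precisely the two facts the appendix records just above the proposition: exactness of the rows of the bicomplex and exactness of the columns of the augmented bicomplex. You are also right that the only delicate points are at the edges, and you handle all of them: the descent only ever needs row exactness in horizontal degrees $\le q-1\le n-1$, never at the top degree $n$ (where the rows are \emph{not} exact --- that failure is exactly what makes $\widetilde{\Lambda}^{p,n}$ nontrivial); the ascent only needs column exactness in positive vertical degree, which holds because $p+k-1\ge p>0$; and in the $p=0$ case the last rung lands in the augmented column, identifying $\ker\bigl(\delta\colon\Lambda^{0,q}\to\Lambda^{1,q}\bigr)$ with $\pi_\infty^\ast\Lambda^q(M)$, which is what produces the answer $\Lambda^q(M)/d\Lambda^{q-1}(M)$. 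Two cosmetic remarks. First, the degenerate cases are covered automatically because $\Lambda^{\ast,r}=0$ for $r<0$; e.g.\ for $q=0$ the cocycle condition already reads $\delta\alpha=0$ and the zig-zag is empty. Second, your injectivity argument is not genuinely an induction: surjectivity was established for all $q$ simultaneously by the staircase, and injectivity in degree $q$ consumes only surjectivity in degree $q-1$ together with injectivity of $\pi_\infty^\ast$ on forms, the latter being immediate since one can pull back along any jet prolongation $j^\infty\phi$, which is a section of $\pi_\infty$.
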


The quotient $\delta$-complex $\widetilde{\Lambda}^{p,n}(J^\infty E)={\Lambda}^{p,n}(J^\infty E)/d{\Lambda}^{p,n-1}(J^\infty E)$ provides a natural augmentation of the variational bicomplex from the right:
\begin{displaymath}
\xymatrix{&\vdots&\vdots &\\
  \ar[r]^-{d}&\Lambda^{2,n}(J^\infty E)\ar[u]^\delta  \ar[r]^-{p} & \widetilde{\Lambda}^{n,2}(J^\infty E) \ar[r] \ar[u]^{\delta} & 0 \\
\ar[r]^-{d}& \Lambda^{1,n}(J^\infty E) \ar[u]^{\delta} \ar[r]^-{p} & \widetilde{\Lambda}^{1,n}(J^\infty E) \ar[r] \ar[u]^{\delta} & 0\\
\ar[r]^-{d}& \Lambda^{0,n}(J^\infty E) \ar[u]^{\delta} \ar[r]^-{p} & \widetilde{\Lambda}^{0,n}(J^\infty E) \ar[r] \ar[u]^{\delta} & 0\\
}\end{displaymath}
 $p$ being the canonical projection onto the quotient space. Proposition \ref{Prop-A} ensures that the appended column is exact.   The space $\widetilde{\Lambda}^{0,n}(J^\infty E)$ is usually identified with the space of local functionals of fields. The correspondence between the two spaces is established by the assignment
 $$
 \widetilde{\Lambda}^{0,n}(J^\infty E) \ni [a]\quad \mapsto\quad A[\phi]=\int_M (j^{\infty}\phi)^\ast (a)\,,
 $$
 with $\phi$ being a compactly supported section of $E$.

The space $\Lambda^{1,n}(J^\infty E)$ has a distinguished subspace spanned by the \textit{source forms}. These are given by finite sums of the forms
$$
\alpha\wedge \delta \phi^a\,,
$$
where $\alpha \in \Lambda^{0,n}(J^\infty E)$. Using the exactness of the variational bicomplex one can prove the following

\begin{proposition}[\cite{Dickey}]\label{SF}
For any $(1,n)$-form $\alpha$ there exists a unique source form $\beta$ and a $(1,n-1)$-form $\gamma$ such that
$$
\alpha=\beta+d\gamma\,.
$$
The form $\gamma$ is uniquely determined up to a $d$-exact form. In particular, a nonzero source form can never be $d$-exact.
\end{proposition}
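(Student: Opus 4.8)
The plan is to handle the three assertions separately: the existence of the decomposition, the uniqueness of $\gamma$ modulo $d$-exact forms, and the uniqueness of $\beta$. The existence step and the $\gamma$-uniqueness are essentially bookkeeping once one has the horizontal exactness of the bicomplex, whereas the real content lies in the last sentence---that a nonzero source form is never $d$-exact---which simultaneously yields the uniqueness of $\beta$.

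For existence I would argue by integration by parts. In adapted coordinates write $\alpha=\sum_{|I|\ge 0}\alpha_a^I\wedge\delta\phi^a_I$ with each coefficient $\alpha_a^I\in\Lambda^{0,n}(J^\infty E)$, and let $N$ be the largest order $|I|$ occurring. From the formulas for $d$ and $\delta$ recorded in the appendix one derives the key identity $d(\delta\phi^a_I)=\delta\phi^a_{Ik}\wedge dx^k$. Using it in an integration by parts shows that a top-order contribution $\alpha_a^{Jk}\wedge\delta\phi^a_{Jk}$ equals, modulo $d\Lambda^{1,n-1}(J^\infty E)$, a term $(D_k\alpha_a^{Jk})\wedge\delta\phi^a_J$ of strictly lower derivative order, where $D_k=\partial/\partial x^k+\phi^b_{Kk}\partial/\partial\phi^b_K$ is the total derivative. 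Each such move lowers $N$ by one, so after finitely many steps all derivatives are stripped off the contact factors and $\alpha$ is brought to the form $\beta+d\gamma$ with $\beta=\beta_a\wedge\delta\phi^a$ a source form. (Equivalently, $\beta$ is the image of $\alpha$ under the interior Euler operator.) Granting uniqueness of $\beta$, the uniqueness of $\gamma$ is then immediate: two admissible choices $\gamma,\gamma'$ satisfy $d(\gamma-\gamma')=0$, and since the rows of the variational bicomplex are exact, the $d$-closed form $\gamma-\gamma'\in\Lambda^{1,n-1}(J^\infty E)$ is itself $d$-exact, so $\gamma$ is pinned down up to a $d$-exact term.

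The main obstacle is the uniqueness of $\beta$, i.e.\ showing that a $d$-exact source form $\beta=d\gamma$ vanishes. Here I would pair $\beta$ with an arbitrary vertical evolutionary vector field $X$ with components $X^a$. Since $X$ is vertical, $i_X\beta=(-1)^n\beta_a X^a$, a $(0,n)$-form; on the other hand the defining property of evolutionary fields gives $i_X d+d\, i_X=0$, so that $i_X\beta=i_X d\gamma=-d(i_X\gamma)$. Pulling back along any compactly supported prolonged section $j^\infty\phi$ and integrating over $M$, the right-hand side vanishes by Stokes' theorem, whence $\int_M(j^\infty\phi)^\ast(\beta_a X^a)=0$ for every $\phi$ and every $X^a$.

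It then remains to invoke the fundamental lemma of the calculus of variations: since the $X^a$ may be taken to be arbitrary compactly supported functions after pullback, the vanishing of all these integrals forces $(j^\infty\phi)^\ast\beta_a=0$ for each section $\phi$. As every point of $J^\infty E$ lies on some prolonged section, this yields $\beta_a=0$ identically, hence $\beta=0$. This establishes that the source form in $\alpha=\beta+d\gamma$ is unique and, as a special case, that a nonzero source form can never be $d$-exact, completing the proof. The one point demanding care is the interchange $i_X d=-d\,i_X$, which rests on $X$ being genuinely evolutionary (so that $i_X$ commutes with the total derivatives); this is where the hypothesis that we work with the prolongation of a vertical field, rather than an arbitrary vector field, is essential.
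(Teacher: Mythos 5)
First, note that the paper does not actually prove Proposition~\ref{SF}: it quotes the result from \cite{Dickey}, remarking only that it can be established ``using the exactness of the variational bicomplex'', i.e.\ by the purely algebraic route. Your existence step (peeling derivatives off the contact factors by integration by parts, which is nothing but the interior Euler operator) and your deduction of the uniqueness of $\gamma$ from exactness of the rows of the bicomplex are precisely that algebraic bookkeeping, and both are correct, minor sign conventions aside. For the heart of the matter --- that a $d$-exact source form must vanish --- you take a genuinely different, analytic route: contraction with an evolutionary field, the identity $i_X d=-d\,i_X$, pullback along prolonged sections, Stokes' theorem, and the fundamental lemma of the calculus of variations. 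In the purely even (bosonic) setting of \cite{Dickey} this is a complete proof, and its advantage over the algebraic one is transparency: it is literally the statement that total divergences do not contribute to the variation of a local functional.

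The analytic step is, however, exactly where your proof has a gap in the generality the paper needs. Here $E$ is a $\mathbb{Z}$-graded supervector bundle whose fiber coordinates include ghosts, antifields and fermions, and Proposition~\ref{SF} is applied to forms built from such odd variables (in Proposition~\ref{p5}, and again in Sec.~4, where it is used to conclude $i_{\mathcal{Q}^2}\omega=0$ for the total BRST charge). A section of $E$ over the ordinary manifold $M$ necessarily has vanishing odd components, since an odd-valued smooth function on an ordinary manifold is zero; consequently $(j^\infty\phi)^\ast$ annihilates every term containing an odd jet coordinate. Your argument therefore shows only that the body of $\beta_a$ (its restriction to the even sector) vanishes, and the concluding step ``every point of $J^\infty E$ lies on some prolonged section, hence $\beta_a=0$ identically'' fails in the graded category. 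The standard repairs are: either prove the statement algebraically --- e.g.\ compare coefficients in $\beta=d\gamma$ order by order in $|I|$, or use the interior Euler operator $I$ together with the combinatorial identity $I\circ d=0$, both of which are insensitive to parity --- or retain your argument but evaluate on families of sections depending on auxiliary odd parameters (equivalently, treat the odd jet coordinates as formal generators and extract coefficients). Either repair is routine, but as written your proof establishes the proposition only for purely even fields.
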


Given $\lambda\in \Lambda^{0,n}(J^\ast E)$, we can apply the proposition above to $\delta \lambda$. We get
$$
\delta \lambda = \delta \phi^a\wedge \frac{\delta \lambda}{\delta \phi^a}+ d\gamma\,.
$$
The coefficients $\delta \lambda/\delta \phi^a$ defining the source form are called the \textit{Euler-Lagrange derivative} of the form $\lambda$. Explicitly,
$$
\frac{\delta \lambda}{\delta \phi^a}=(-\partial)_I\frac{\partial \lambda }{\partial \phi^a_I}\,,
$$
where
\begin{equation}\label{dpat}
(-\partial)_I=(-1)^{|I|}\partial_I\,,\qquad \partial_I=\partial_{i_1}\cdots\partial_{i_k}\,, \qquad \partial_i=\frac{\partial}{\partial x^i}+\phi^a_{Ii}\frac{\partial_l}{\partial\phi^a_{I}}\,.
\end{equation}

Dual to the space of $1$-forms on $J^\infty E$ is the space of vector fields ${\frak X}(J^\infty E)$. In terms of local coordinates, the elements of ${\frak X}(J^\infty E)$ are given by the formal series
\begin{equation}\label{X}
X=X^i\frac{\partial}{\partial x^i} + X_{I}^a\frac{\partial_l}{\partial \phi^a_I}\,,
\end{equation}
where $X^i$ and $X_I^a$ are smooth functions on $J^\infty E$. A vector field $X$ is called \textit{vertical} if $X^i=0$.

The operation $i_X$ of contraction of the vector field (\ref{X}) with a differential form is defined as usual: $i_X$ is a differentiation of the exterior algebra $\Lambda^{\ast}(J^\infty E)$ of form degree $-1$ and the Grassmann parity $\widetilde{X}+1$ which action on the basis 1-forms is given by
$$
i_{X}\delta\phi^a_I=X^a_I\,,\qquad i_{X}dx^j=X^j\,.
$$

The operator of the Lie derivative  along the vector field $X$ is defined by the magic Cartan's formula
\begin{equation}\label{CMF}
L_X=Di_X+(-1)^{\tilde{X}}i_XD\,.
\end{equation}
A vertical vector field $X$ is called \textit{evolutionary} if
$$
i_Xd+(-1)^{\tilde{X}}di_X=0\,.
$$
It follows from the definition that  the vector field (\ref{X}) is  evolutionary iff $X^i=0$ and  $X^a_I=\partial_I (X^a)$, where $\partial_I$ is defined by (\ref{dpat}). Hence, any vertical field of the form $X_0=X^a{\partial}/{\partial \phi^a}$ admits a unique prolongation to an evolutionary vector field. We call $X_0$ the \textit{source vector field} for the evolutionary vector field $X$. (Our nomenclature is not standard; most of the authors prefer to call the vector field $X_0$  evolutionary, rather than its  prolongation $X$.)  Note that the Lie derivative along the evolutionary vector field $X$ can be written as $L_X=i_X \delta+(-1)^{\tilde{X}}\delta i_X$. The Lie algebra of all evolutionary vector fields is denoted  by $\mathfrak{X}_{ev}(J^\infty E)$.

\end{document}